\newtheorem{thm}{Theorem}
\newtheorem{prop}{Proposition}
\theoremstyle{definition}
\theoremstyle{definition}
\providecommand{\norm}[1]{\lVert#1\rVert}
\providecommand{\abs}[1]{\lvert#1\rvert}
\providecommand{\inner}[2]{\langle#1,#2\rangle}
\date{}
\title{Localizable Particles in the Classical Limit of Quantum Field Theory}
\author{Benjamin H.~Feintzeig \and Jonah Librande \and Rory Soiffer}
\institute{{Benjamin H.~Feintzeig \at Department of Philosophy, University of Washington \and Jonah Librande \at Department of Mathematics, University of Washington \and Rory Soiffer \at Department of Computer Science, University of Waterloo}}
\date{}
\begin{document}

\maketitle

\begin{abstract}
A number of arguments purport to show that quantum field theory cannot be given an interpretation in terms of localizable particles.  We show, in light of such arguments, that the classical $\hbar\to 0$ limit can aid our understanding of the particle content of quantum field theories.  In particular, we demonstrate that for the massive Klein-Gordon field, the classical limits of number operators can be understood to encode local information about particles in the corresponding classical field theory. 
\end{abstract}

\keywords{quantum field theory \and particle interpretation \and classical limit \and deformation quantization}

\section{Introduction}
\label{sec:intro}

Relativistic quantum field theory underlies the modern discipline of particle physics.  Practitioners use the theory to conceptualize interactions between particles and make quantitative predictions about scattering experiments.  Yet a number of arguments purport to show that various features of our particle concept are incompatible with the constraints of relativistic quantum physics.\footnote{For a nice comprehensive review of a broader collection of issues for particle interpretations than will be discussed in this paper, see \citet{Fr20}.  \citet{Ba16} also contains an introduction to issues with particle interpretations in the context of philosophy of quantum field theory more generally.}  Building on results of \citet{Ma96}, \citet{HaCl02} argue that in relativistic quantum theory, particles cannot be localized in spatial regions.  The conclusions of such arguments leave interpreters of relativistic quantum field theory with a puzzle.  How can an underlying theory that does not allow for localized particles support descriptive and explanatory practices that appear to involve localized particles?

Previous investigations have focused on the issue of recovering the phenomenology of particle physics from relativistic quantum field theory.  For example, \citet{Bu95} provides a way of recovering scattering theory at asymptotic times.\footnote{Note that in the limit of asymptotic times taken for scattering theory, one only has the ability to describe momentum states and one loses the notion of an exactly localizable particle. In the mathematical physics literature, analyses of particles proceed via the technical notion of ``almost local" particle observables, which are used to analyze the localization structure of the one-particle subspace.  For mathematical development, see, e.g., \citet{BuFr82,Ha92,BuPoSt91,Bu95}.  For philosophical discussion of ``almost local" particle observables, see \citet{HaCl02,ArSt13,Va15}.  In this paper, we also deal with approximate localization, but in a somewhat different sense.  Moreover, we analyze localization properties of number operators in the full field theory without restricting attention to single particle states.}  In contrast, the goal of this paper is to make a small contribution toward our understanding of the theoretical role of particles in quantum field theory.  We aim to make precise a sense in which a theoretical description of particles \emph{emerges} from quantum field theory through the behavior of number operators.

In this vein, we follow \citet{Wa01}, who argues for the emergence of particles in terms of the approximate localization, in a certain sense, of structures in quantum field theory.\footnote{For earlier mathematical work on localization, see \citet{Kn61} and \citet{Li63}.  For earlier philosophical work on localization, see \citet{Sa92,Sa95} and \citet{Re95a,Re95b}.} Similarly, recent work by \citet{PaPy19} employs the non-relativistic limit as an approximation to analyze the localizability of particles.  Our investigation complements this work by instead analyzing the classical $\hbar\to 0$ limit of quantum field theory.  We take this approach because there are existing tools for formulating the $\hbar\to 0$ limit in the C*-algebraic framework for quantum theory with full mathematical rigor \citep[see][]{La17}.  Moreover, \citet{La13} has already initiated the use of these tools to analyze emergent behavior in quantum theories.  We believe many approaches are helpful for understanding particles in quantum field theory, and so we will pursue an analysis through the classical limit without trying to rule out other avenues.  We hope the positive outcome of our analysis of the classical limit speaks in favor of the approach taken here, but we do not believe it speaks against other approaches to understanding particle-like behavior.

We wish to distinguish the results in this paper from a number of other recent approaches to understanding particles in quantum field theories.  First, some recent work on pilot wave (i.e.,  Bohmian-type) quantum field theories allows one to understand the content of those theories in terms of particles.  Early work in this direction can be traced to \citet[][p. 174-7]{Be87}; for an overview of recent progress, see \citet{St10,Struyve11}.  Second, some recent work has led to the development of a ``dissipative" approach to quantum field theory, using tools from non-equilibrium thermodynamics \citep{OlOt21}, which those authors argue can support a particle interpretation. Both of these approaches involve substantial modification of the traditional framework for quantum field theory, as is clear in their use of different dynamical laws.  In contrast, our task in this paper is to search for possible routes to understanding particle-like behavior within the standard formulation of quantum field theory without modifying the theory itself.  We make no judgments concerning these distinct approaches to particles; we only claim that they are not relevant to the question we treat.  Lastly, we mention the recent work of \citet{Bi18}, who argues that the standard textbook description of modes in quantum field theory representing an ontology of particles is unwarranted.  While our starting point of standard quantum field theory agrees with that used by \citet{Bi18}, we employ very different methods to yield a particle interpretation.  Instead of interpreting the formalism of quantum field theory \emph{directly} in terms of particles, we consider only \emph{indirectly} how particle-like behavior may emerge from quantum field theory in the $\hbar\to 0$ limit.

Our goal in this paper is to argue that the classical limit helps us understand particle content in quantum field theory in terms of classical field theoretic quantities.  Our strategy of interpreting quantum field theories in terms of relationships between theories at different scales and emergent structures in some sense follows the approach advocated by \citet{Wa06} and \citet{Wi17}.\footnote{These authors at times suggest their interpretive approach is somehow incompatible with or not conducive to the mathematical analysis provided by algebraic quantum field theory.  By employing what we take to be a similar interpretive approach by using the tools of algebraic quantum field theory, we believe we demonstrate the compatibility of these approaches in this paper.}  Both of those authors emphasize the importance of an often overlooked interpretive task.  Philosophers often aim to answer the question: ``if this theory provided a true description of the world in all respects, what would the world be like?"  But Wallace and Williams claim it is important to also consider the distinct question ``given that this theory provides an approximately true description of our world, what is our world approximately like?" \citep[][p. 210]{Wi17}.  In this paper, we only aim to contribute to the latter task, and only for certain approximative regimes.  It is only with regard to this question that we are interpreting quantum field theory at all as opposed to the classical field theories that we deal with more directly.  As such we make no claims about what one might call the ``fundamental ontology" of quantum field theory, but we still take our conclusions to be important to the interpretation and understanding of that theory.

The paper is structured as follows.  In \S\ref{sec:back}, we review the arguments against localized particle interpretations that we engage with in this paper.  In \S\ref{sec:class}, we summarize recent technical work concerning classical limits of number operators in quantum field theory within a C*-algebraic formulation of the classical limit.  In \S\ref{sec:part}, we use these results to clarify the theoretical status of localized particles in quantum field theory by providing two possible interpretations of approximately localizable particles.  We conclude with some discussion in \S\ref{sec:con}.

\section{The Case Against Localizable Particles}
\label{sec:back}

Theorems due to \citet{Ma96} and \citet{HaCl02} aim to show that relativistic quantum theories cannot allow for localizable particles.\footnote{See also the related results of \citet{He98,He98b}.}  We begin with Malament's no-go result concerning particle positions, and then present Halvorson and Clifton's refinement for local number operators.

Malament targets the existence of a position operator in relativistic quantum theory.  If there were a position operator, then for any foliation of Minkowski spacetime $M$ into spacelike hyperplanes, there would be a projection-valued measure on bounded open subsets of those hyperplanes serving as the position operator's spectral decomposition.  Suppose we are given such a foliation; call a bounded open subset of one of the hyperplanes a \textit{spatial set}.  A \textit{localization system} is defined as a triple $(\mathcal{H},\Delta\mapsto P_\Delta,a\mapsto U(a))$, where $\mathcal{H}$ is a Hilbert space, each spatial set $\Delta$ is assigned a projection $P_\Delta$ on $\mathcal{H}$, and $a\mapsto U(a)$ is a strongly continuous unitary representation of the translation group of $M$.  We interpret each projection $P_\Delta$ as representing the proposition that the particle is located within $\Delta$.

Malament considers the following constraints on a localization system:
\begin{enumerate}

    \item \textit{Translation Covariance}: for all spatial sets $\Delta$ and all vectors $a$ in $M$,
    \[U(a)P_\Delta U(a)^* = P_{\Delta+a}.\]
    
    \item \textit{Energy Condition}: for each future-directed unit timelike vector $a$ in $M$, the unique self-adjoint generator\footnote{The existence of such a generator is guaranteed by Stone's theorem \citep[see][p. 264]{ReSi80}.} of the one-parameter unitary family $t\in\mathbb{R}\mapsto U(ta)$ has a spectrum bounded from below.
    
    \item \textit{Microausality}: if $\Delta_1,\Delta_2$ are spacelike separated spatial sets, then
    \[P_{\Delta_1}P_{\Delta_2} = P_{\Delta_2}P_{\Delta_1}.\]
    
    \item \textit{Localizability}: if $\Delta_1,\Delta_2$ are disjoint spatial sets in the same hyperplane, then
    \[P_{\Delta_1}P_{\Delta_2} = P_{\Delta_2}P_{\Delta_1} = 0.\]

\end{enumerate}
Translation Covariance allows us to understand the unitary representation of the translation group as providing a link between the propositions associated with translated spatial sets.  The Energy Condition guarantees that one cannot extract an infinite amount of energy from the system.  The Microcausality condition enforces the relativistic constraint of no ``act-outcome" correlations between spacelike separated events.  And the Localizability condition ensures that a particle cannot be found in two disjoint spatial sets at the same time.

With these constraints, Malament proves the following result:
\begin{thm}[\citeauthor{Ma96}]
Suppose a localization system $(\mathcal{H},\Delta\mapsto P_\Delta,a\mapsto U(a))$ satisfies conditions (1)-(4).  Then $P_\Delta = 0$ for all spatial sets $\Delta$.
\end{thm}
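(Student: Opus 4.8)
The plan is to exploit the interplay between the Energy Condition and Translation Covariance to show that certain operators built from the $P_\Delta$ must vanish, and then propagate this vanishing to all spatial sets. First I would fix a hyperplane in the given foliation and a spatial set $\Delta$ lying in it, together with a spacelike translation vector $a$ parallel to that hyperplane such that $\Delta$ and $\Delta + a$ are disjoint (this is possible since $\Delta$ is bounded). By Localizability, $P_\Delta P_{\Delta+a} = 0$, and more generally $P_\Delta P_{\Delta + na} = 0$ for every nonzero integer $n$. Using Translation Covariance to rewrite $P_{\Delta+na} = U(na) P_\Delta U(na)^*$, this says $P_\Delta U(na) P_\Delta U(na)^* = 0$ for all $n \neq 0$, i.e., the operator-valued function $n \mapsto P_\Delta U(na) P_\Delta U(-na)$ vanishes off the origin.

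The key analytic step is then to apply a theorem on analytic continuation of bounded operator-valued functions arising from the spectral calculus. Consider the function $f(t) = P_\Delta U(ta) U(tb) P_\Delta$ (or a suitable variant involving $U(ta) P_\Delta U(-ta)$), where $b$ is chosen to be a future-directed timelike vector so that $ta + tb$ or a related combination has the energy generator's spectrum bounded below. The Energy Condition, via Stone's theorem and the spectral theorem, guarantees that the relevant unitary group extends to a bounded weakly-analytic function on a half-plane in the complex parameter. A function that is analytic on a half-plane, bounded, and vanishes on an infinite discrete set of real points that accumulate appropriately (or by a reflection/edge-of-the-wedge type argument, vanishes on a half-line) must vanish identically. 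One then concludes $P_\Delta U(a) P_\Delta = 0$ for the original spacelike $a$, and in particular, setting things up so that $a \to 0$ within the region of disjointness fails — so instead one takes a limit or uses continuity of $a \mapsto U(a)$ to get $P_\Delta^2 = P_\Delta = 0$ once the vanishing has been continued down to $a = 0$.

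Having shown $P_\Delta = 0$ for one spatial set $\Delta$, I would finish by a covering argument: any spatial set $\Delta'$ in any hyperplane of the foliation can be translated, and by Translation Covariance $P_{\Delta'} = U(a) P_{\Delta' - a} U(a)^*$, so it suffices to handle each hyperplane; within a fixed hyperplane every bounded open set is a translate of (a subset of) one for which we have the result, and monotonicity of the projection assignment (which follows from Localizability applied to the complement, or is built into the setup) gives $P_{\Delta'} = 0$ for all of them.

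The main obstacle I expect is the analytic continuation step: making precise the sense in which the Energy Condition forces the operator-valued function to be analytic and bounded on a half-plane, and then invoking the right uniqueness theorem (a version of the Schwarz reflection principle or a theorem to the effect that a bounded analytic function on a half-plane vanishing on a real sequence with an accumulation point, or on a ray, is zero). The geometric bookkeeping — choosing the timelike and spacelike vectors so that the combined translation has a generator with semibounded spectrum while still controlling the disjointness of $\Delta$ and its translates — is the delicate part that makes this argument genuinely relativistic rather than a soft functional-analytic triviality.
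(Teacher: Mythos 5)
The paper itself does not prove this theorem; it imports it from \citet{Ma96}, so your sketch has to be measured against Malament's own argument, whose engine is a lemma of Borchers. Your overall skeleton is right: use Localizability plus Translation Covariance to get $P_\Delta U(na)P_\Delta U(na)^*=0$ (equivalently $P_\Delta U(na)P_\Delta=0$) for all $n\neq 0$, then kill $P_\Delta$ at $n=0$ by an analyticity argument powered by the Energy Condition. But the analytic step, which you yourself flag as the obstacle, is stated in a form that is false and that hides the real difficulty. First, the Energy Condition semibounds the generator only of \emph{timelike} translations; the generator of $t\mapsto U(ta)$ for your spacelike $a$ is never bounded below, so $t\mapsto P_\Delta U(ta)P_\Delta$ does not extend to any half-plane. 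The standard repair is to write $a=b+c$ with $b$ and $-c$ future-directed timelike and continue the two-parameter function $(z,w)\mapsto\langle\psi,P_\Delta U(zb)U(wc)P_\Delta\psi\rangle$ into a tube (opposite half-planes in $z$ and $w$); your ``$ta+tb$ or a related combination'' gestures at this, but then the real diagonal $\{z=w=t\}$ lies on the distinguished boundary of the tube, so the uniqueness step is an edge-of-the-wedge-type argument, not a one-variable identity theorem. Second, the uniqueness principle you invoke---a bounded analytic function on a half-plane vanishing on a discrete set of real (i.e.\ boundary) points with suitable accumulation must vanish---is simply false: $z\mapsto e^{2\pi i z}-1$ is bounded on the closed upper half-plane and vanishes at every integer. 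The actual proof exploits stronger input: once $|a|$ exceeds the diameter of $\Delta$, the set $\Delta+ta$ is disjoint from $\Delta$ for \emph{all} real $|t|\geq 1$, so one has vanishing on whole boundary rays, and even then the positivity structure of $P_\Delta U P_\Delta$ is needed. Packaging exactly this is the content of Borchers' lemma, which Malament cites rather than reproves.

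Two smaller points. Your sketch never uses Microcausality, whereas in Malament's argument it supplies the commutation hypothesis $[P_\Delta,\,U(tb)P_{\Delta'}U(-tb)]=0$ for small $t$ that the Borchers-type lemma requires when one moves off the initial hyperplane in a timelike direction; a reconstruction should say where each of conditions (1)--(4) is consumed (whether Microcausality is strictly dispensable is a further question, but then the whole burden falls on the analytic lemma you have not correctly stated). Finally, your closing ``covering argument'' is both unnecessary and unsupported: the argument applies verbatim to every spatial set $\Delta$, since each bounded $\Delta$ admits a tangent spacelike $a$ with the required disjointness, and the ``monotonicity of the projection assignment'' you appeal to ($\Delta'\subseteq\Delta\Rightarrow P_{\Delta'}\leq P_\Delta$) is not among Malament's conditions and does not follow from Localizability.
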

\noindent If $P_\Delta = 0$ for all spatial sets, then the probability of finding the particle in any spatial region is zero.  In this case, such a structure cannot be used to represent a localizable particle position.  Hence, the theorem may be interpreted as a ``no-go" result, showing that no particle position operator can exist in a relativistic quantum theory.

Those who are trained in modern particle physics may be skeptical of the upshot of Malament's theorem for relativistic quantum \emph{field} theory, where we do not typically employ particle position operators.  In fact, Malament's own interpretation is that his result pushes one towards field theories rather than particle theories in the relativistic setting.  In standard formulations of quantum field theory, however, one employs particle number operators rather than position operators.  One might think that this is enough to avoid the consequences of Malament's theorem if we could understand such number operators as being associated with the number of particles in a spatial set.  But \citet{HaCl02} prove a result analogous to Malament's theorem demonstrating that localizable number operators are not compatible with certain constraints of relativistic quantum theory.

\citet{HaCl02} consider the same setting as above, except that they allow spatial sets to be arbitrary (not necessarily open) bounded subsets.\footnote{This is necessary so that the Number Conservation condition below is non-trivial as there are no countable disjoint coverings of a hyperplane by open sets.}  They define a \textit{system of local number operators} as a triple $(\mathcal{H},\Delta\mapsto N_\Delta,a\mapsto U(a))$, where $\mathcal{H}$ is a Hilbert space, each spatial set $\Delta$ is assigned an operator $N_\Delta$ on $\mathcal{H}$ with eigenvalues $\{0,1,2,...\}$, and $a\mapsto U(a)$ is a strongly continuous unitary representation of the translation group of $M$.  We now interpret each operator $N_\Delta$ as representing the number of particles in the spatial set $\Delta$.

Halvorson and Clifton consider the constraints of Translation Covariance, the Energy Condition, and Microcausality, which carry over in the straightforward way for the operators $N_{\Delta}$.  They append to this list the conditions:
\begin{enumerate}\setcounter{enumi}{4}
\item \textit{Additivity}: if $\Delta_1,\Delta_2$ are disjoint spatial sets in the same hyperplane, then
\[N_{\Delta_1} + N_{\Delta_2} = N_{\Delta_1\cup \Delta_2}.\]
\item \textit{Number Conservation}: if $\{\Delta_n\}_{n\in\mathbb{N}}$ is a countable disjoint covering of a hyperplane consisting of spatial sets, then
\[\overline{N}:=\sum_n N_{\Delta_n}\]
converges to a self-adjoint operator on $\mathcal{H}$ satisfying
\[U(a)\overline{N}U(a)^* = \overline{N}\]
for any timelike vector $a$ in $M$.

\end{enumerate}
The Additivity condition tells us that for any state, the expectation value of the number of particles in the union of two disjoint spatial regions in the same hyperplane is the sum of the expectation value for the number of particles in each of the two regions.  The Number Conservation condition guarantees the existence of a total number operator $\overline{N}$ representing the number of particles contained in all spatial regions, whose expectation value is constant in time.  Halvorson and Clifton assert Number Conservation is a reasonable condition for free field theories, even if it may not hold in interacting field theories.

With these constraints, Halvorson and Clifton prove the following result.
\begin{thm}[\citeauthor{HaCl02}]
\label{thm:locnum}
Suppose a system of local number operators $(\mathcal{H},\Delta\mapsto N_\Delta,a\mapsto U(a))$ satisfies conditions (1)-(3) and (5)-(6).  Then $N_\Delta = 0$ for all spatial sets $\Delta$.
\end{thm}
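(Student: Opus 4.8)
The plan is to run an argument parallel to Malament's, with the Localizability condition --- which fails outright for number operators --- replaced by the joint force of Additivity and Number Conservation. Fix once and for all a single spatial set $\Delta$ in a hyperplane $\Sigma$; it suffices to show $N_\Delta = 0$. I would begin by extracting the routine consequences of the hypotheses. Extending $\Delta$ to a countable partition $\{\Delta_n\}$ of $\Sigma$ with $\Delta_1 = \Delta$, Additivity together with the positivity of each $N_{\Delta_n}$ (its spectrum is contained in $\{0,1,2,\dots\}$) gives monotonicity, $N_{\Delta'} \le N_{\Delta''}$ for $\Delta' \subseteq \Delta''$ in $\Sigma$, and --- using Number Conservation to make sense of the sum --- $0 \le N_\Delta \le \overline N$ as quadratic forms; since the remaining cells $\Delta_n$ with $n \ge 2$ are disjoint from $\Delta$, hence spacelike separated from it, Microcausality moreover yields $[\,\overline N, N_\Delta\,] = 0$. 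Finally, for any finite family $\Delta + b_1, \dots, \Delta + b_k$ of pairwise disjoint translates lying in $\Sigma$, extending again to a partition and discarding the nonnegative leftover terms gives $\sum_{j=1}^k N_{\Delta + b_j} \le \overline N$.

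The substantive step is to show that $N_\Delta$ is invariant under the whole translation group: $U(c) N_\Delta U(c)^* = N_\Delta$, equivalently (by Translation Covariance) $N_{\Delta + c} = N_\Delta$, for every $c \in M$. This is where the Energy Condition enters, in the form of a Borchers-type analyticity argument: because for future-directed timelike $a$ the generator of $t \mapsto U(ta)$ is bounded below, matrix elements $t \mapsto \langle \xi, U(ta)\eta\rangle$ extend to bounded holomorphic functions on a half-plane, so such a matrix element that vanishes on a nonempty real interval vanishes identically. Applying this principle to bounded functions or spectral projections of the (unbounded) operators $N_\Delta$, and feeding in the commutation relations that Microcausality supplies for spacelike-separated regions, one propagates commutativity from small timelike translations to all timelike translations, and --- since differences of timelike vectors exhaust $M$ --- concludes that $N_\Delta$ commutes with every $U(c)$, whence the claimed invariance.

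Given the invariance the conclusion is quick. Since $\Delta$ is bounded, for each $k \in \mathbb{N}$ we can place $k$ pairwise disjoint translates $\Delta + b_1, \dots, \Delta + b_k$ inside the hyperplane $\Sigma$; then the inequality $\sum_j N_{\Delta+b_j} \le \overline N$ recorded above, together with invariance, gives $k\, N_\Delta = \sum_{j=1}^k N_{\Delta + b_j} \le \overline N$ as quadratic forms on the dense domain $\mathrm{dom}(\overline N^{1/2})$. Hence $\langle \psi, N_\Delta \psi\rangle \le \tfrac1k \langle \psi, \overline N \psi\rangle \to 0$ for every $\psi$ in that domain, so $N_\Delta^{1/2}$ annihilates a dense set; as the kernel of the self-adjoint operator $N_\Delta^{1/2}$ is closed, $N_\Delta^{1/2} = 0$, i.e.\ $N_\Delta = 0$. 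Since $\Delta$ was arbitrary, the theorem follows.

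I expect the second paragraph to be the crux and the principal obstacle. Turning the purely local, spacelike-separation content of Microcausality into the global statement that $N_\Delta$ commutes with all translations is exactly the point where Malament's and Borchers' analyticity techniques are essential, and it is also the most delicate: the $N_\Delta$ are unbounded, so the analyticity lemma must be applied to bounded functions of them, and some care is needed to reassemble the resulting commutation relations into invariance of $N_\Delta$ itself. The first and third paragraphs are, by comparison, bookkeeping with Additivity, positivity, and standard operator theory.
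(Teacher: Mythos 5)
Your overall architecture does match the strategy of the cited proof (the paper itself does not reprove this theorem; it cites \citet{HaCl02}, whose argument indeed runs: show $U(c)N_\Delta U(c)^*=N_\Delta$ for all $c$, then use Additivity, Number Conservation, and the existence of arbitrarily many disjoint translates of a bounded $\Delta$ inside a hyperplane to get $kN_\Delta\le\overline N$ for all $k$). Your first and third paragraphs are essentially sound. But your second paragraph --- which you rightly flag as the crux --- is not a proof sketch so much as a restatement of what must be shown, and the mechanism you gesture at cannot work as described. First, there is no seed relation to propagate: Microcausality gives $[N_{\Delta_1},N_{\Delta_2}]=0$ for spacelike-separated \emph{regions}, not any commutativity between $N_\Delta$ and a translation unitary $U(c)$, and you never exhibit a quantity that vanishes (or commutes) on an open interval of translations and is then analytically continued. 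Second, and decisively, your paragraph 2 purports to derive invariance from conditions (1)--(3) alone, holding (5)--(6) in reserve for the counting step. That would prove far too much: conditions (1)--(3) are satisfied by manifestly non-invariant covariant assignments (e.g.\ spectral projections of smeared field operators in a free theory), and if (1)--(3) implied translation invariance, Malament's own theorem would follow in one line, with Localizability invoked only once at the end --- whereas his proof must feed $P_{\Delta_1}P_{\Delta_2}=0$ into the analyticity lemma itself. So Additivity and Number Conservation must enter the invariance argument, not merely the final squeeze, and your proposal does not say how.

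The technical reason your analyticity principle does not apply directly is also worth naming: the relevant expressions, $U(ta)N_\Delta U(ta)^*$ and its commutators with other local operators, contain both $U(ta)$ and $U(ta)^*$. With a generator bounded below, $e^{izH}$ is bounded-analytic only in one half-plane and $e^{-izH}$ only in the other, so such products have no common half-plane of analyticity and ``vanishes on an interval, hence vanishes identically'' is unavailable. This is precisely why the Borchers--Malament lemma requires the extra orthogonality hypothesis $EF=0$ alongside local commutativity in order to conclude $EU(ta)FU(ta)^*=0$. Since Localizability is exactly the condition dropped for number operators, the missing idea is the substitute for that orthogonality: it has to be manufactured from the ingredients you record but do not use at this stage --- the commuting, integer-spectrum operators $N_{\Delta_1},N_{\Delta_2}$ for disjoint regions satisfy $N_{\Delta_1}+N_{\Delta_2}\le\overline N$ with $\overline N$ conserved, which yields orthogonality relations among suitable spectral projections that can seed the Borchers argument. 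Without supplying that step, the proposal's central claim is assumed rather than proved.
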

\noindent The conclusion that $N_\Delta=0$ for all spatial sets implies that we will not find particles in any spatial region.  Again, such a structure is incapable of representing particles.  So this serves as a ``no-go" result for localizable particles in free field theories.

\section{The Classical Limit}
\label{sec:class}

While the result of Halvorson and Clifton described in \S\ref{sec:back} shows that number operators in quantum field theories cannot be associated with local regions while satisfying their constraints, it is well known that free quantum field theories allow for the definition of number operators associated with the entire system.  In other words, all parties agree that number operators exist; what is at issue in the ``no-go" results just reviewed is whether they can be associated with local regions.  In this section, we will review recent work on the classical limits of number operators in free quantum field theories, using the free massive Klein-Gordon field as a concrete illustration.  Then we will go on to use the classical limit to aid in understanding the localization of number operators next.  In \S\ref{sec:QFT} we review the construction of number operators in quantum field theory, and in \S\ref{sec:quant} we review an analysis of their classical limits.

\subsection{Quantum Field Theory}
\label{sec:QFT}
We aim to construct number operators in quantum field theories by first specifying an abstract C*-algebra of bounded quantities, and then considering Hilbert space representations of this algebra on a standard Fock space.  One can obtain field operators as certain limits of bounded quantities, and then construct number operators from the fields.  We will emphasize that the tool of a complex structure used to construct the representing Hilbert space also determines the form of the number operator.

One can construct free bosonic quantum field theories with the so called \textit{Weyl (or CCR) algebra}.  In a free field theory on Minkowski spacetime, one starts with a symplectic vector space $(E,\sigma)$ of test functions whose dual space is the space of (possibly distributional) solutions to the field equations, or equivalently, initial data on a Cauchy surface $\Sigma\cong\mathbb{R}^3$.  The kinematical algebra of bounded physical quantities can then be specified by the Weyl algebra $\mathcal{W}(E,\hbar\sigma)$, the smallest C*-algebra generated by the linearly independent elements $W_\hbar(F),W_\hbar(G)$ for each $F,G\in E$ with operations
\begin{align}
    W_\hbar(F) W_\hbar(G) &:= e^{-\frac{i\hbar}{2}\sigma(f,g)}W_\hbar(F+G)\\
    W_\hbar(F)^* &:= W_\hbar(-F) \nonumber
\end{align}
and the minimal regular norm \citep[see][]{MaSiTeVe74,Pe90}.

For example, for a real scalar field $\varphi: M\to \mathbb{R}$ satisfying the Klein-Gordon equation\footnote{The Klein-Gordon equation is often presented with $m^2$ replaced by $m^2/\hbar^2$.  Since we are considering a quantum field theory whose classical $\hbar\to 0$ limit is the classical Klein-Gordon field with finite mass, our setup builds commutation relations for the quantum theory from Eq. (\ref{eq:KG}) with no factors of $\hbar$.  The version of the Klein-Gordon equation we use is truly classical in the sense that it does not depend on $\hbar$.  However, one can also interpret this setup as applying to the Klein-Gordon equation with the factors of $\hbar$ included (with $m^2$ replaced by $m^2/\hbar^2$) by understanding the limit to encode simultaneous rescalings of $m$ and $\hbar$ in such a way that $m^2/\hbar^2$ remains constant.  This can happen, for example, if one considers simultaneous unit changes for mass, distance, and time that hold fixed the value of the speed of light $c$.  As an aside, we mention that one can also investigate the alternative limit $m^2/\hbar^2\to 0$ by employing renormalization techniques \citep{BuVe95,BuVe98}.}
\begin{align}
\label{eq:KG}
    \frac{\partial^2\varphi}{\partial t^2} - \nabla^2\varphi = -m^2\varphi,
\end{align}
we set\footnote{$C_c^\infty(\Sigma)$ denotes the real vector space of smooth, compactly supported real-valued functions on $\Sigma$.} $E = C_c^\infty(\Sigma)\oplus C_c^\infty(\Sigma)$ and define $\sigma$ for all $(f_1,g_1),(f_2,g_2)\in E$ by
\begin{align}
    \sigma((f_1,g_1),(f_2,g_2)) := \int_{\mathbb{R}^3} f_1g_2-f_2g_1.
\end{align}
We understand a (possibly distributional) solution to the Klein-Gordon equation $(\pi,\varphi)$ with $\pi = \frac{\partial \varphi}{\partial t}$ to be an element of $E'$ with action $(\pi,\varphi)[f,g] := \int_\Sigma \pi f + \varphi g$ for $(f,g)\in E$.

One can construct particle number operators in the quantum theory by focusing on Fock space representations of the Weyl algebra.  A Fock space representation arises from a choice of timelike spacetime symmetry group acting on $E$, which determines a \textit{complex structure} on $E$, i.e. a linear map $J:E\to E$ satisfying \citep[see][]{ClHa01,Ka79}
\begin{enumerate}[(i)]
\item $\sigma(JF,JG) = \sigma(F,G)$;
\item $\sigma(F,JF)\geq 0$;
\item and $J^2 = -I$
\end{enumerate}
for all $F,G\in E$.

For example, the inertial timelike symmetries of Minkowski spacetime determine the \emph{Minkowski complex structure} $J_M$ for the Klein-Gordon field, defined as follows.  First, define the differential operator $\mu_M:C_c^\infty(\Sigma)\to C_c^\infty(\Sigma)$ by
\begin{align}
    \mu_M := (m^2-\nabla^2)^{1/2}.
\end{align}
Then define $J_M: E\to E$ by
\begin{align}
J_M(f,g) := (-\mu_M^{-1}g,\mu_Mf)    
\end{align}
for all $(f,g)\in E$.  This choice of $J_M$ is the unique complex structure that commutes with inertial time evolution of solutions given by $f\mapsto e^{-i\mu_Mt}f$ for $f\in C_c^\infty(\Sigma)$ and $t\in\mathbb{R}$.  But this is only one possible choice of complex structure.

In addition, the Lorentz boost symmetries of the Rindler wedge in Minkowski spacetime determine the \emph{Rindler complex structure} $J_R$ for the Klein-Gordon field, defined as follows.  Consider initial data for the right Rindler wedge on the surface $R = \{(x,y,z)\in \Sigma\ |\ x>0\}$.  Consider the space $E(R) = C_c^\infty(R)\oplus C_c^\infty(R)$ of test functions with support on $R$.  Define the differential operator $\mu_R:C_c^\infty(R)\to C_c^\infty(R)$ by
\begin{align}
    \mu_R:= \bigg(e^{2x}\Big(m^2-\frac{\partial^2}{\partial y^2} - \frac{\partial^2}{\partial z^2}\Big) - \frac{\partial}{\partial x^2}\bigg)^{1/2}.
\end{align}
Then define $J_R:E(R)\to E(R)$ by
\begin{align}
    J_R(f,g) := (-\mu_R^{-1}g,\mu_Rf)
\end{align}
for all $(f,g)\in E(R)$. This choice of $J_R$ is the unique complex structure that commutes with time evolution of solutions in Rindler coordinates (Lorentz boosts) given by $f\mapsto e^{-i\mu_Rt}f$ for $f\in C_c^\infty(R)$ and $t\in\mathbb{R}$.  Thus, distinct complex structures arise from distinct choices of timelike symmetry groups, and as we now discuss, each of these complex structures defines a different number operator associated with a representation of the Weyl algebra.\footnote{For more detail on each of these complex structures and domain issues, see \citet{Ka85}.}

A complex structure allows us to construct a Hilbert space known as a Fock space, which carries the structures of interest.  We can define a Fock space by first noticing that a complex structure $J$ determines a complex inner product on $E$ defined by
\begin{align}
    \alpha_J(F,G) := \sigma(F,JG) + i\sigma(F,G)
\end{align}
for all $F,G\in E$.  The completion of $E$ with respect to this inner product is a complex Hilbert space $\mathcal{H}_J$.  The Fock space over $\mathcal{H}_J$ is then defined as
\begin{align}
    \mathcal{F}(\mathcal{H}_J) := \bigoplus_{n=0}^\infty \mathcal{S}\bigg(\bigotimes^n\mathcal{H}_J\bigg),
\end{align}
where $\mathcal{S}(\bigotimes^n\mathcal{H}_J)$ denotes the symmetric subspace of the $n$-fold tensor product of $\mathcal{H}_J$ with itself, and which for $n=0$ is defined as $\mathbb{C}$.  This Hilbert space carries a representation of the Weyl algebra unitarily equivalent to the GNS representation $\pi_\omega$ for the state $\omega$ defined by
\begin{align}
    \omega(W_\hbar(f)) := e^{-\frac{\hbar}{4}\alpha_J(F,F)}
\end{align}
for all $F\in E$.  The state $\omega$ is the unique vacuum state invariant under the timelike symmetries we began with.  Since the state $\omega$ is regular, the Fock space carries unbounded field operators $\Phi_\hbar(F)$, which are the self-adjoint generators of the one-parameter unitary families $t\in\mathbb{R}\mapsto \pi_\omega(W_\hbar(tF))$.  These field operators can be used to define the standard creation and annihilation operators and number operators
\begin{align}
\label{eq:numberdef}
a_\hbar^J(F) := \frac{1}{\sqrt{2}}\bigg(\Phi_\hbar(F) + i\Phi_\hbar(&JF)\bigg)\ \ \ \ \ \ \ \ \ \ (a_\hbar^J(F))^* := \frac{1}{\sqrt{2}}\bigg(\Phi_\hbar(F) -i\Phi_\hbar(JF)\bigg)\\
 &N_\hbar^J(F) := (a_\hbar^J(F))^*a_\hbar^J(F) \nonumber
\end{align}
for each $F\in E$ \citep[for more detail, see][]{ClHa01}.

There are multiple distinct Fock space representations of the Weyl algebra for the Klein-Gordon field.  For example, if one chooses the complex structure $J_M$ corresponding to the inertial timelike translation symmetries on Minkowski spacetime, then one arrives at the standard Minkowski vacuum, which we denote $\omega_M$, with its associated Fock space representation $(\pi_M,\mathcal{F}(\mathcal{H}_M))$.  But if one chooses the complex structure $J_R$ corresponding to the timelike Lorentz boost symmetries of the right Rindler wedge, then one arrives at the Rindler vacuum, which we denote $\omega_R$, with its associated Fock space representation $(\pi_R,\mathcal{F}(\mathcal{H}_R))$ \citep[see][]{Ka85}.  With respect to the Lorentz boost symmetries of the Rindler wedge, the Minkowski vacuum state appears as a thermal (KMS) state with finite temperature.\footnote{For more on the definition and interpretation of KMS states, see \citet{BrRo87,BrRo96,Ru11}.}  This is the celebrated Unruh effect \citep[see][]{Ka85,ArEaRu03,Ea11}, which is taken to imply that an observer accelerating through the Rindler wedge will observe a finite temperature, and hence particles, even in the Minkowski vacuum.  Thus, we have two families of number operators for further analysis.

\subsection{Strict Quantization and Number Operators}
\label{sec:quant}

Given the issues with particle localization in relativistic quantum field theory, we now seek a positive account of the particle-like content of such theories.  We will aim at an analysis of number operators that allows us to understand them as \emph{approximately} representing localizable particles.  The particular approximation we choose involves the classical $\hbar\to 0$ limit \citep[see also][]{Fe19}, although we make no claim that this is the only relevant approximation.  This section provides the relevant technical background and summarizes recent results of \citet{BrFeGaLiSo20} concerning the classical limits of number operators.  Those authors construct a quantization map that allows for the analysis of classical limits of unbounded operators without a choice of Hilbert space representation.  Applying this analysis to number operators yields integral expressions for the particle number content of the corresponding classical field theory, which we will analyze in the following sections.

One can formulate the classical limit using the mathematical framework of \textit{strict quantization} \citep[see][]{Ri89,Ri93,La98b,La17}.  A strict quantization is a family of C*-algebras $\{\mathfrak{A}_\hbar\}_{\hbar\in[0,1]}$ and a family of \emph{quantization maps} $\{\mathcal{Q}_\hbar: \mathcal{P}\to\mathfrak{A}_\hbar\}_{\hbar\in[0,1]}$ defined on some Poisson subalgebra $\mathcal{P}$ of $\mathfrak{A}_0$, which is required to be commutative.  The idea is that  $\mathfrak{A}_\hbar$ represents the collection of quantities in the quantum theory where Planck's constant takes on the value $\hbar\in(0,1]$, while $\mathfrak{A}_0$ represents the collection of quantities of the corresponding classical theory.  To appropriately capture the limiting behavior of the algebraic structure, a strict quantization is required to satisfy:
\begin{enumerate}[(i)]
\item $\lim_{\hbar\to 0}\norm{\frac{i}{\hbar}[\mathcal{Q}_\hbar(A),\mathcal{Q}_\hbar(B)] - \mathcal{Q}_\hbar(\{A,B\})}_\hbar = 0$;
\item $\lim_{\hbar\to 0}\norm{\mathcal{Q}_\hbar(A)\mathcal{Q}_\hbar(B) - \mathcal{Q}_\hbar(AB)}_\hbar = 0$; and
\item the map $\hbar\mapsto \norm{\mathcal{Q}_\hbar(A)}_\hbar$ is continuous
\end{enumerate}
for each $A,B\in\mathcal{P}$, where $\norm{\cdot}_\hbar$ is the C*-norm on $\mathfrak{A}_\hbar$.\footnote{Furthermore, it is typically required that $\mathcal{Q}_\hbar(\mathcal{P})$ is norm dense in $\mathfrak{A}_\hbar$, but this can always be achieved by restricting attention to an appropriate C*-subalgebra of the codomain.}  Given such a structure, we understand the classical limit of the family of quantities $\mathcal{Q}_\hbar(A)$ to be the classical quantity $A\in\mathcal{P}$.

In our example of the free Klein-Gordon field, we let $\mathfrak{A}_\hbar = \mathcal{W}(E,\hbar\sigma)$ and define the quantization maps as the linear extension of
\begin{align}
\label{eq:quantmap}
    \mathcal{Q}^{\alpha}_\hbar(W_0(F)) := e^{-\frac{\hbar}{4}\alpha(F,F)}W_\hbar(F),
\end{align}
for all $F\in E$, where $\alpha$ is any complex inner product on $E$.  It follows from results of \citet{BiHoRi04b} and \citet{HoRi05} that this structure forms a strict quantization. And importantly for what follows, the choice of a complex inner product $\alpha$ does not matter at this stage because for any other complex inner product $\alpha'$, the corresponding quantization maps are equivalent in the sense that
\[\lim_{\hbar\to 0}\norm{\mathcal{Q}^{\alpha}_\hbar(A)-\mathcal{Q}^{\alpha'}_\hbar(A)}_\hbar = 0\]
for each $A\in\mathcal{P}$.  For this reason, we will omit any mention of the inner product $\alpha$ and simply denote the quantization map by $\mathcal{Q}_\hbar$, noting that it need not be identified with any of the inner products $\alpha_J$ associated with the complex structures $J$ used to define the number operators of interest.  In this way, the structure of a strict quantization will allow us to take the classical limit independently of the choice of a Fock space representation (which, recall, is determined by a choice of complex structure $J$).

Notice, however, that since the domain of the quantization maps is a C*-algebra of bounded quantities, such structures cannot immediately be used to analyze the classical limits of number operators, which are unbounded.  To analyze the classical limits of number operators, one must extend the quantization map to a larger (partial) algebra including unbounded quantities.  This is possible because the quantization maps $\mathcal{Q}_\hbar$ are positive, and hence continuous, so they can be extended to the completion of $\mathfrak{A}_\hbar$ in the weak topology, which will contain many unbounded operators.

We require one further technical alteration to the setup because the completion of $\mathfrak{A}_\hbar$ will not contain the unbounded field operators defined by 
\begin{align}
\label{eq:fielddef}
\Phi_\hbar(F) := -i\lim_{t\to 0} \frac{W_\hbar(tF)-I}{t}
\end{align}
for $F\in E$ (where the limit is taken in the weak topology).  Instead, one must employ a different algebra.  Define $V_\hbar$ as the subspace of $(\mathfrak{A}_\hbar)^*$ generated by the regular states, and let $\overline{V}_\hbar$ denote the weak* closure of $V_\hbar$ in $(\mathfrak{A}_\hbar)^{***}$.  We will consider the algebras $\mathfrak{A}_\hbar^{**}/N(\overline{V}_\hbar)$, where $N(\overline{V}_\hbar)$ denotes the closed two-sided ideal given by the annihilator of $\overline{V}_\hbar$ in $\mathfrak{A}^{**}_\hbar$.  The quantization maps, which we continue to denote $\mathcal{Q}_\hbar$, can be continuously extended to maps on the completions of these algebras in the weak* topology, which contain unbounded field operators defined by Eq. (\ref{eq:fielddef}) (with the limit in the weak* topology) \citep{BrFeGaLiSo20}.

With this framework, one can define creation, annihilation and number operators by a choice of complex structure $J$ on $E$ according to Eq. (\ref{eq:numberdef}).  \citet{BrFeGaLiSo20} show that the number operators so defined satisfy analogues of the limiting conditions (i) and (ii) stated above for algebraic operations with even many unbounded operators of interest.  Hence, one can understand the classical quantity
\[N_0^J(F) = (a_0^J(F))^*(a_0^J(F)) = \Phi_0(F)^2 + \Phi_0(JF)^2\]
for $F\in E$ as the classical limit of the corresponding number operator (for the complex structure $J$) in a quantum field theory.  For our purposes, the classical limit of the number operator for the two particular choices of $J$ described above are relevant: the Minkowski and Rindler number operators.

The Minkowski number operators are defined using the Minkowski complex structure $J_M$ via Eq. (\ref{eq:numberdef}) as $N^{J_M}_\hbar(F)$ for $F\in E$, which we simply denote $N^M_\hbar(F)$.  These are the standard number operators appearing in the Fock space representation $\pi_M$ for the Minkowski vacuum.  In the framework of the strict quantization defined by Eq. (\ref{eq:quantmap}), the classical limits of the Minkowski number operators take the form $N_0^{M}(F)$, which belongs to the weak* completion of $\mathfrak{A}_0^{**}/N(\overline{V}_0)$.

One can put the classical limit of the Minkowski number operator in a more explicit form.  First, we define the total Minkowski number operator as\footnote{Pointwise convergence on $E\subseteq E'$ (or $E(R)\subseteq E(R)'$) of the infinite sums employed in this section is guaranteed by the cited results of \citet{BrFeGaLiSo20}.}
\begin{align}
    \overline{N}^M_0 := \sum_k N^M_0(F_k),
\end{align}
where $\{F_k\}$ is an $\alpha_{J_M}$-orthonormal basis for $E$.  The classical Weyl algebra $\mathcal{W}(E,0)$ has a canonical representation as continuous almost periodic functions on the dual $E'$ with
\begin{align}
W_0(f,g)(\pi,\varphi) &= \exp\Big(i\int_{\Sigma}\pi f + \varphi g\Big)\\
\Phi_0(f,g)(\pi,\varphi) &= \int_{\Sigma}\pi f + \varphi g \nonumber
\end{align}
for $(f,g),(\pi,\varphi)\in C^\infty(\Sigma) \oplus C^\infty(\Sigma)\subseteq E'$.  \citet{BrFeGaLiSo20} establish the form of the classical limit $\overline{N}_0^M$ of the total Minkowski number operator in this representation.

\begin{thm}
[\citeauthor{BrFeGaLiSo20}]
\label{thm:totnum}
For all $\pi,\varphi\in C_c^\infty(\Sigma)$,
\[\overline{N}_0^M (\pi,\varphi) = \frac{1}{2}\int_{\Sigma} (\mu_M^{-1/2}\pi)^2 + (\mu_M^{1/2}\varphi)^2.\]
\end{thm}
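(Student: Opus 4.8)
The plan is to reduce the statement to a Parseval identity in the one-particle Hilbert space $\mathcal{H}_{J_M}$ and then to compute the relevant Riesz vector explicitly.

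\emph{Step 1 (reduction to a norm).} In the canonical representation of $\mathcal{W}(E,0)$, for each $F=(f,g)\in E$ the classical field acts on a solution $(\pi,\varphi)\in E'$ by $\Phi_0(F)(\pi,\varphi)=\int_\Sigma\pi f+\varphi g$, which is real-linear in $F$. Since the classical algebra is commutative, the definition (\ref{eq:numberdef}) gives
\[
N_0^M(F)(\pi,\varphi)=\overline{a_0^M(F)(\pi,\varphi)}\;a_0^M(F)(\pi,\varphi)=\abs{a_0^M(F)(\pi,\varphi)}^2,\qquad a_0^M(F)(\pi,\varphi)=\tfrac{1}{\sqrt{2}}\big(\Phi_0(F)(\pi,\varphi)+i\,\Phi_0(J_MF)(\pi,\varphi)\big).
\]
A short computation with $J_M^2=-I$ shows that $F\mapsto a_0^M(F)(\pi,\varphi)$ is conjugate-linear for the complex structure $J_M$, so it extends to a conjugate-linear functional $\ell_{(\pi,\varphi)}$ on $\mathcal{H}_{J_M}$. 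Granting (Step 3) that $\ell_{(\pi,\varphi)}$ is bounded, Riesz representation produces $\eta_{(\pi,\varphi)}\in\mathcal{H}_{J_M}$ with $\overline{\ell_{(\pi,\varphi)}(F)}=\alpha_{J_M}(\eta_{(\pi,\varphi)},F)$, and Parseval's identity applied to the $\alpha_{J_M}$-orthonormal basis $\{F_k\}$ yields $\overline{N}_0^M(\pi,\varphi)=\sum_k\abs{\ell_{(\pi,\varphi)}(F_k)}^2=\norm{\eta_{(\pi,\varphi)}}_{\alpha_{J_M}}^2$; in particular this value is independent of the chosen basis, as it must be.

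\emph{Step 2 (identifying the Riesz vector).} I would solve $\alpha_{J_M}(\eta,F)=\overline{\ell_{(\pi,\varphi)}(F)}$ for all $F=(f,g)\in E$ using the explicit forms $\sigma((f_1,g_1),(f_2,g_2))=\int_\Sigma f_1g_2-f_2g_1$, $J_M(f,g)=(-\mu_M^{-1}g,\mu_Mf)$, and the $L^2(\Sigma)$-self-adjointness of $\mu_M=(m^2-\nabla^2)^{1/2}$; equivalently one passes to the spatial Fourier transform, under which $\mu_M$ becomes multiplication by $\omega(k)=\sqrt{m^2+\abs{k}^2}$. Expanding $\alpha_{J_M}(\eta,F)=\sigma(\eta,J_MF)+i\sigma(\eta,F)$ and matching the coefficients of $f$ and $g$ gives $\eta_{(\pi,\varphi)}=\tfrac{1}{\sqrt{2}}\big(\mu_M^{-1}\pi,\ \mu_M\varphi\big)$. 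Then, using the same formulas and self-adjointness of $\mu_M^{\pm1/2}$,
\[
\norm{\eta_{(\pi,\varphi)}}_{\alpha_{J_M}}^2=\sigma\big(\eta_{(\pi,\varphi)},\,J_M\eta_{(\pi,\varphi)}\big)=\tfrac{1}{2}\int_\Sigma\pi\,\mu_M^{-1}\pi+\varphi\,\mu_M\varphi=\tfrac{1}{2}\int_\Sigma(\mu_M^{-1/2}\pi)^2+(\mu_M^{1/2}\varphi)^2,
\]
which is the asserted identity. (In Fourier space the last expression equals $\tfrac{1}{2}\int_{\mathbb{R}^3}\omega(k)^{-1}\abs{\hat\pi(k)}^2+\omega(k)\abs{\hat\varphi(k)}^2\,dk$ by Plancherel.)

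\emph{Step 3 (the main obstacle).} Steps 1--2 are routine algebra; the real work is the analytic bookkeeping with the unbounded operator $\mu_M$. One must check that $\mu_M^{-1}\pi$, $\mu_M\varphi$, and $\mu_M^{\pm1/2}\pi$, $\mu_M^{\pm1/2}\varphi$ are well defined and lie in the appropriate $L^2$/Sobolev spaces --- which follows from $\pi,\varphi\in C_c^\infty(\Sigma)$, since their Fourier transforms are Schwartz while $\omega(k)^{\pm1/2}$ grows at most polynomially and is bounded below --- and hence that $\ell_{(\pi,\varphi)}$ is genuinely bounded on $\mathcal{H}_{J_M}$ (a one-line Cauchy--Schwarz estimate, once one notes $\norm{F}_{\alpha_{J_M}}^2=\norm{\mu_M^{1/2}f}_{L^2}^2+\norm{\mu_M^{-1/2}g}_{L^2}^2$ and that $J_M$ is $\alpha_{J_M}$-unitary), so that the Riesz and Parseval steps are legitimate. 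One must also justify interchanging the sum $\sum_k$ with evaluation at $(\pi,\varphi)\in E'$ and with the weak*-limits defining $\Phi_0$ and $N_0^M$; this is precisely the pointwise convergence on $E'$ furnished by the cited results of \citet{BrFeGaLiSo20}, so it can be invoked rather than reproved. I expect the handling of $\mu_M$ on non-compactly-supported functions such as $\mu_M^{-1}\pi$, while keeping everything inside $\mathcal{H}_{J_M}$, to be the point demanding the most care.
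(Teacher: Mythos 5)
Your argument is correct, and it is essentially the same as the one underlying the paper (the theorem itself is cited from \citet{BrFeGaLiSo20}, but the paper's proof of Prop.~\ref{prop:NW} runs the identical Parseval computation, transported to $L^2(\Sigma)$ via the isometry $K(f,g)=\mu_M^{1/2}f+i\mu_M^{-1/2}g$, and specializes to this statement when $\Delta=\Sigma$). Your Riesz vector $\eta_{(\pi,\varphi)}=\tfrac{1}{\sqrt2}(\mu_M^{-1}\pi,\mu_M\varphi)$ corresponds under $K$, up to a phase, to the vector $K(\varphi,-\pi)$ used there, so the two derivations differ only in whether the Parseval step is performed abstractly in $\mathcal{H}_{J_M}$ or concretely in $L^2(\Sigma)$.
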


\noindent This shows that the classical limit of the total Minkowski number operator is the integral of a density function depending on the field and its conjugate momentum.

For comparison in the next section, we note that one can use the same methods to analyze the classical limit of the total Hamiltonian.  The total Hamiltonian is defined as
\begin{align}
    H_0^M:=\sum_k N_0^M(g_k,0),
\end{align}
where $\{g_k\}$ is an $L^2(\Sigma,\mathbb{R})$-orthonormal basis for the real vector space $C_c^\infty(\Sigma)$.  In the representation of the classical Weyl algebra as almost periodic functions, we have the following explicit form for the classical limit $H_0^M$ of the total Hamiltonian.

\begin{thm}[\citeauthor{BrFeGaLiSo20}]
For all $\pi,\varphi\in C_c^\infty(\Sigma)$,
\[H_0^M (\pi,\varphi) = \frac{1}{2}\int_{\Sigma} \pi^2 + m^2\varphi^2 + (\nabla\varphi)^2.\]
\end{thm}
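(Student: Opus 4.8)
\noindent\emph{Proof strategy.} The plan is to compute $H_0^M(\pi,\varphi)$ directly in the representation of $\mathcal{W}(E,0)$ by almost periodic functions, reducing the infinite sum to Parseval's identity in $L^2(\Sigma,\mathbb{R})$ together with one integration by parts, in close parallel with the proof of Theorem~\ref{thm:totnum}. First I would expand a single summand. Applying (\ref{eq:numberdef}) with $F = (g_k,0)$, using the formula $J_M(g_k,0) = (0,\mu_M g_k)$, and using that $\mathfrak{A}_0$ is commutative, one obtains
\[
N_0^M(g_k,0) = (a_0^M(g_k,0))^*a_0^M(g_k,0) = \tfrac12\bigl(\Phi_0(g_k,0)^2 + \Phi_0(0,\mu_M g_k)^2\bigr).
\]
Evaluating the classical field operators at $(\pi,\varphi)\in C_c^\infty(\Sigma)\oplus C_c^\infty(\Sigma)\subseteq E'$ then yields $\Phi_0(g_k,0)(\pi,\varphi) = \int_\Sigma \pi g_k = \inner{\pi}{g_k}_{L^2}$ and, using that $\mu_M = (m^2-\nabla^2)^{1/2}$ is self-adjoint on $L^2(\Sigma,\mathbb{R})$, $\Phi_0(0,\mu_M g_k)(\pi,\varphi) = \int_\Sigma \varphi\,\mu_M g_k = \inner{\mu_M\varphi}{g_k}_{L^2}$.

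Next I would sum over $k$. Pointwise convergence of $\sum_k N_0^M(g_k,0)$ on $E\subseteq E'$ is guaranteed by the cited results of \citet{BrFeGaLiSo20}, so it only remains to identify the value of the limit; since the summands $\inner{\pi}{g_k}_{L^2}^2$ and $\inner{\mu_M\varphi}{g_k}_{L^2}^2$ are nonnegative, the two pieces may be summed separately. Because $C_c^\infty(\Sigma)$ is dense in $L^2(\Sigma,\mathbb{R})$, an $L^2$-orthonormal basis of $C_c^\infty(\Sigma)$ is a complete orthonormal basis of $L^2(\Sigma,\mathbb{R})$, so Parseval's identity gives $\sum_k \inner{\pi}{g_k}_{L^2}^2 = \norm{\pi}_{L^2}^2 = \int_\Sigma \pi^2$ and $\sum_k \inner{\mu_M\varphi}{g_k}_{L^2}^2 = \norm{\mu_M\varphi}_{L^2}^2$; here one checks that $\mu_M\varphi\in L^2(\Sigma)$ because $\varphi$ is Schwartz and $\mu_M$ acts on Fourier transforms by multiplication by $(m^2+\abs{k}^2)^{1/2}$, preserving the Schwartz class. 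Finally, self-adjointness of $\mu_M$ and $\mu_M^2 = m^2-\nabla^2$ give
\[
\norm{\mu_M\varphi}_{L^2}^2 = \inner{\varphi}{\mu_M^2\varphi}_{L^2} = \int_\Sigma m^2\varphi^2 - \varphi\nabla^2\varphi = \int_\Sigma m^2\varphi^2 + (\nabla\varphi)^2,
\]
the last step being integration by parts with no boundary terms since $\varphi$ has compact support. Combining the pieces gives $H_0^M(\pi,\varphi) = \tfrac12\bigl(\int_\Sigma \pi^2 + \int_\Sigma m^2\varphi^2 + (\nabla\varphi)^2\bigr)$, as claimed.

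This is essentially a calculation, so I do not expect a single deep obstacle; the steps that need care are (a) confirming that the extended quantization machinery of \citet{BrFeGaLiSo20} really does send the classically-defined $N_0^M(g_k,0)$ to the quadratic function displayed above and that the series converges in the required topology --- which is exactly what the cited results supply --- and (b) the mapping properties of $\mu_M$ on $C_c^\infty(\Sigma)$, namely that $\varphi$ lies in its domain and $\mu_M\varphi\in L^2(\Sigma)$, which is what legitimizes both the Parseval step and the identity $\norm{\mu_M\varphi}^2 = \inner{\varphi}{\mu_M^2\varphi}$. Relative to Theorem~\ref{thm:totnum}, the only structural change is that the $\alpha_{J_M}$-orthonormal basis of $E$ is replaced by the basis $\{(g_k,0)\}$ built from an $L^2$-orthonormal basis of $C_c^\infty(\Sigma)$; this is precisely what converts the symmetric weights $\mu_M^{\pm 1/2}$ appearing there into a trivial weight on $\pi$ and the factor $\mu_M^2 = m^2-\nabla^2$ on $\varphi$, producing the gradient term.
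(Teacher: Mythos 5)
Your computation is correct and takes essentially the approach the paper relies on: the theorem is stated here only as a citation to \citet{BrFeGaLiSo20}, but your Parseval-plus-integration-by-parts derivation runs exactly parallel to the computation the paper does carry out for Prop.~\ref{prop:NW}, and you correctly restore the factor $\tfrac12$ in $N_0^J(F)=\tfrac12\bigl(\Phi_0(F)^2+\Phi_0(J F)^2\bigr)$ that the displayed formula in \S\ref{sec:quant} omits. Nothing further is needed.
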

\noindent Since the expression on the right hand side is the familiar $(0,0)$ component of the stress-energy tensor for the Klein-Gordon field, this shows that the classical limit of the total Hamiltonian is the classical total energy, which similarly is the integral of an energy density.

We can give a similar analysis of the classical limit of the Rindler number operator. First, we use the Rindler complex structure $J_R$ to define the Rindler number operators via Eq. (\ref{eq:numberdef}) as $N_\hbar^{J_R}(F)$ for $F\in E(R)$, which we simply denote $N_\hbar^R(F)$.   This is the standard number operator appearing in the Fock space representation $\pi_R$ for the Rindler vacuum \citep[see][]{Ka85}.  In the framework of the strict quantization defined by Eq. (\ref{eq:quantmap}), the classical limits of the Rindler number operators take the form $N_0^R(F)$, which belongs to the weak* completion of $\mathfrak{A}_0^{**}/N(\overline{V}_0)$.

As above, one can put the classical limit of the Rindler number operator in a more explicit form.  Define the total Rindler number operator by
\begin{align}
    \overline{N}_0^R := \sum_k N_0^R(F_k),
\end{align}
where $\{F_k\}$ is an $\alpha_{J_R}$-orthonormal basis for $E(R)$.  \citet{BrFeGaLiSo20} establish the form of the total Rindler number operator in the representation of the classical Weyl algebra as almost periodic functions on $E(R)'$.

\begin{thm} [\citeauthor{BrFeGaLiSo20}]
For all $\pi,\varphi\in C_c^\infty(R)$,
\[\overline{N}_0^R(\pi,\varphi) = \frac{1}{2}\int_{R}(\mu_R^{-1/2}e^x\pi)^2 + (\mu_R^{1/2}\varphi)^2.\]
\end{thm}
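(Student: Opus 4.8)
The plan is to run the argument behind Theorem~\ref{thm:totnum} verbatim, with the Minkowski data $(E,\sigma,J_M,\mu_M)$ replaced throughout by the Rindler data $(E(R),\sigma,J_R,\mu_R)$, and then to spend the real work on the coordinate conversions that are special to the wedge. First I would unwind the definition: $\overline{N}_0^R = \sum_k N_0^R(F_k) = \sum_k (a_0^R(F_k))^*(a_0^R(F_k))$, and since the commutator contribution $-\tfrac{\hbar}{2}\sigma(F,J_RF)$ in $(a_\hbar^R(F))^*(a_\hbar^R(F))$ is $O(\hbar)$, classically $N_0^R(F) = \tfrac12\bigl(\Phi_0(F)^2 + \Phi_0(J_RF)^2\bigr)$. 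Evaluating in the almost-periodic representation at a point $(\pi,\varphi)$ of $E(R)'$ via $\Phi_0(f,g)(\pi,\varphi) = \int_R \pi f + \varphi g$, and writing $F_k=(f_k,g_k)$ with $J_R(f_k,g_k) = (-\mu_R^{-1}g_k,\,\mu_Rf_k)$, one is reduced to evaluating
\[
\overline{N}_0^R(\pi,\varphi) = \frac12\sum_k\Bigl[\Bigl(\textstyle\int_R \pi f_k + \varphi g_k\Bigr)^2 + \Bigl(\int_R \varphi\,\mu_Rf_k - \pi\,\mu_R^{-1}g_k\Bigr)^2\Bigr].
\]

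The key structural point is that, because $\{F_k\}$ is an $\alpha_{J_R}$-orthonormal basis for $E(R)$, the family $\{F_k,\,J_RF_k\}$ is a real orthonormal basis of the completion of $E(R)$ in the real inner product $g_{J_R} := \operatorname{Re}\alpha_{J_R} = \sigma(\,\cdot\,,J_R\,\cdot\,)$; this uses only $J_R^2 = -I$, $\sigma(J_RF,J_RG) = \sigma(F,G)$, $\sigma(F,J_RF)\geq 0$, and the orthonormality relations for the $F_k$. Hence the two bracketed sequences above are the $g_{J_R}$-Fourier coefficients of the single element $\Theta(\pi,\varphi)$ representing, via Riesz, the functional $(f,g)\mapsto \int_R \pi f + \varphi g$ on $(E(R),g_{J_R})$, and Parseval's identity gives $\overline{N}_0^R(\pi,\varphi) = \tfrac12\,\norm{\Theta(\pi,\varphi)}_{g_{J_R}}^2$. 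Here one invokes the cited results of \citet{BrFeGaLiSo20} to guarantee that the sum converges pointwise on $C_c^\infty(R)\oplus C_c^\infty(R)$ and that the functional actually lies in the Hilbert-space dual.

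It then remains to identify $\Theta(\pi,\varphi)$ and compute its norm. Expanding $g_{J_R}$ and using that $\mu_R$ is self-adjoint and positive for the measure implicit in $\sigma$, one gets $g_{J_R}\bigl((f_1,g_1),(f_2,g_2)\bigr) = \int_R (\mu_Rf_1)f_2 + (\mu_R^{-1}g_1)g_2$, so the Riesz condition for $\Theta(\pi,\varphi) = (a,b)$ reads $\mu_Ra = (\text{the momentum density dual to the first test-function slot})$ and $\mu_R^{-1}b = \varphi$. This is exactly where the Rindler geometry enters: the quantity dual to the first slot is the momentum conjugate to $\varphi$ along the boost flow, which on the Cauchy surface differs from the inertial momentum $\pi$ of the statement by the factor $e^x$ inherited from the boost time coordinate, whereas the field value $\varphi$, being a scalar, is unchanged. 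Thus $a = \mu_R^{-1}(e^x\pi)$, $b = \mu_R\varphi$, and
\[
\norm{\Theta(\pi,\varphi)}_{g_{J_R}}^2 = \int_R \bigl(\mu_R^{1/2}\mu_R^{-1}e^x\pi\bigr)^2 + \bigl(\mu_R^{-1/2}\mu_R\varphi\bigr)^2 = \int_R (\mu_R^{-1/2}e^x\pi)^2 + (\mu_R^{1/2}\varphi)^2,
\]
which after multiplying by $\tfrac12$ is the stated identity.

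I expect the main obstacle to be precisely this last bookkeeping rather than anything conceptual: pinning down which measure $\mu_R$ is self-adjoint against, how $E(R)'$ is identified with pairs of functions on $R$, and how the boost Killing field restricts to the Cauchy surface, so that the weights come out exactly as $(\mu_R^{-1/2}e^x\pi)^2$ with nothing extra on the $\varphi$-term — none of which had an analogue in the Minkowski computation, where the inertial foliation is adapted to the coordinates throughout. The remaining ingredients are routine and shared with the other theorems in this section: the $\hbar\to 0$ behavior of $(a_\hbar^R(F))^*(a_\hbar^R(F))$, domain management for $\mu_R^{\pm1/2}$, and the justification via \citet{BrFeGaLiSo20} that all the sums and limits converge in the weak* completion of $\mathfrak A_0^{**}/N(\overline V_0)$.
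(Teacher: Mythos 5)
The paper does not actually prove this theorem---it is imported from \citet{BrFeGaLiSo20}---but your argument is correct and is essentially the same Parseval computation the paper itself carries out for the Minkowski analogue in Prop.~\ref{prop:NW}, with your real-inner-product/Riesz phrasing of $g_{J_R}=\sigma(\cdot,J_R\cdot)$ playing the role of the complex embedding $K$ used there. Your diagnosis of the one genuinely new step---that the $e^x$ enters because the momentum conjugate to the boost flow on the $\tau=0$ slice is $\partial_\tau\varphi = e^x\partial_t\varphi$, while the statement is phrased in terms of the inertial $\pi=\partial_t\varphi$---is also the right one.
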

\noindent This expression is, of course, distinct from that for the Minkowski number operator.  Nevertheless, the classical limit of the Rindler number operator is still the integral of a density function depending on the field and its conjugate momentum.

Further, the same methods can be used to analyze the classical limit of the total Rindler Hamiltonian.  The total Rindler Hamiltonian is defined as
\begin{align}
    H_0^R:=\sum_kN_0^R(g_k,0),
\end{align}
where $\{g_k\}$ is an $L^2(R,\mathbb{R})$-orthonormal basis for $C_c^\infty(R)$.  In the representation of the classical Weyl algebra as almost periodic functions, we have the following explicit form for the classical limit $H_0^R$ of the total Rindler Hamiltonian.

\begin{thm}
[\citeauthor{BrFeGaLiSo20}]
For all $\pi,\varphi\in C_c^\infty(R)$,
\[H_0^R(\pi,\varphi) = \frac{1}{2}\int_Re^{2x}\bigg(\pi^2 + m^2\varphi^2 + \Big(\frac{\partial\varphi}{\partial y}\Big)^2 + \Big(\frac{\partial\varphi}{\partial z}\Big)^2 + e^{-2x}\Big(\frac{\partial\varphi}{\partial x}\Big)^2\bigg).\]
\end{thm}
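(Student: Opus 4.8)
The plan is to reduce the claim, exactly as in the preceding theorem for $\overline{N}^R_0$, to a Parseval resummation followed by an integration by parts, the only new ingredient being the choice of orthonormal basis one sums against. Start from the definition $H^R_0 = \sum_k N^R_0(g_k,0)$ with $\{g_k\}$ an $L^2(R,\mathbb{R})$-orthonormal basis of $C_c^\infty(R)$, and recall two facts from \citet{BrFeGaLiSo20}: that the classical number quantity is $N^R_0(F) = \tfrac12\bigl(\Phi_0(F)^2 + \Phi_0(J_RF)^2\bigr)$ for $F \in E(R)$, and that in the almost-periodic representation of $\mathcal{W}(E(R),0)$ on $E(R)'$ the field evaluates as $\Phi_0(f,g)(\pi,\varphi) = \int_R e^x\pi f + \varphi g$, the momentum slot carrying the Rindler lapse weight $e^x$ — the same weight responsible for the factor $e^x\pi$ in the $\overline{N}^R_0$ formula. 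Since $J_R(g_k,0) = (0,\mu_Rg_k)$, each summand is $N^R_0(g_k,0)(\pi,\varphi) = \tfrac12\bigl[\langle e^x\pi, g_k\rangle^2 + \langle \varphi, \mu_Rg_k\rangle^2\bigr]$, and moving $\mu_R$ onto $\varphi$ by its self-adjointness on $L^2(R)$ rewrites the second inner product as $\langle \mu_R\varphi, g_k\rangle$.

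The next step is to sum over $k$. Both $e^x\pi$ and $\mu_R\varphi$ lie in $L^2(R)$ — the first by compact support, the second because $C_c^\infty(R)$ sits in the domain of the order-one operator $\mu_R$ — so Parseval's identity against the orthonormal basis $\{g_k\}$ collapses the series to $H^R_0(\pi,\varphi) = \tfrac12\bigl(\norm{e^x\pi}^2_{L^2(R)} + \norm{\mu_R\varphi}^2_{L^2(R)}\bigr)$. This interchange of the (trace-like, and hence only weak*-convergent) sum with evaluation at the phase space point $(\pi,\varphi)$ is precisely the pointwise convergence on $E(R)\subseteq E(R)'$ supplied by \citet{BrFeGaLiSo20}, so it is invoked rather than reproved. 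It then remains to write $\norm{e^x\pi}^2 = \int_R e^{2x}\pi^2$ and $\norm{\mu_R\varphi}^2 = \langle \varphi, \mu_R^2\varphi\rangle$, substitute $\mu_R^2 = e^{2x}(m^2 - \partial_y^2 - \partial_z^2) - \partial_x^2$, and integrate by parts once in each spatial variable; since $e^{2x}$ depends on $x$ alone it passes through the $y$- and $z$-integrations unharmed, yielding $\langle\varphi,\mu_R^2\varphi\rangle = \int_R e^{2x}\bigl(m^2\varphi^2 + (\partial_y\varphi)^2 + (\partial_z\varphi)^2\bigr) + (\partial_x\varphi)^2$. Adding the two pieces and factoring $e^{2x}$ out front, with the lone $x$-derivative term picking up the compensating $e^{-2x}$, gives the asserted density $\tfrac12\int_R e^{2x}\bigl(\pi^2 + m^2\varphi^2 + (\partial_y\varphi)^2 + (\partial_z\varphi)^2 + e^{-2x}(\partial_x\varphi)^2\bigr)$.

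Because the analytically delicate part — the convergence of $\sum_k N^R_0(g_k,0)$ and the legitimacy of interchanging it with evaluation — is imported wholesale from \citet{BrFeGaLiSo20}, the residual work is bookkeeping, concentrated in two places. First, one must correctly attach the Rindler lapse weight $e^x$ to the momentum slot of the almost-periodic representation; this is what makes the kinetic term emerge as $e^{2x}\pi^2$ rather than $\pi^2$, in agreement with the $\overline{N}^R_0$ theorem, and it is the point I would most expect a careless computation to get wrong. Second, one needs the domain facts making $\norm{\mu_R\varphi}^2 = \langle\varphi,\mu_R^2\varphi\rangle$ and the integrations by parts valid on $C_c^\infty(R)$, which are available from the analysis of the Rindler complex structure and of $\mu_R$ in \citet{Ka85}. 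A useful consistency check is to run the identical argument with $J_M$ in place of $J_R$: the Minkowski lapse is $1$ and $\mu_M^2 = m^2 - \nabla^2$, so everything collapses to $\tfrac12\int_\Sigma \pi^2 + m^2\varphi^2 + (\nabla\varphi)^2$, recovering the classical energy density and the preceding Minkowski Hamiltonian theorem.
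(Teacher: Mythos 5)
The paper does not prove this theorem itself---it is imported from \citet{BrFeGaLiSo20} without argument---so there is no in-paper proof to compare routes with. Your reconstruction is correct and follows exactly the computational pattern the paper does exhibit (the Parseval argument in the proof of Prop.~\ref{prop:NW}): the factor $\tfrac12$ in $N_0^R(F)=\tfrac12\bigl(\Phi_0(F)^2+\Phi_0(J_RF)^2\bigr)$, the identity $J_R(g_k,0)=(0,\mu_R g_k)$, and the $e^x$ weight on the momentum slot all cross-check against the paper's stated formula for $\overline{N}_0^R$, and the substitution $\mu_R^2=e^{2x}(m^2-\partial_y^2-\partial_z^2)-\partial_x^2$ with integration by parts reproduces the asserted density, with the remaining convergence and domain issues correctly deferred to \citet{BrFeGaLiSo20} and \citet{Ka85}.
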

\noindent Since the expression on the right is the integral of the Rindler energy density associated with the Lorentz boost symmetries of the Rindler wedge \citep[see][]{Ka85}, this shows that the classical limit of the total Rindler Hamiltonian is the classical total Rindler energy.

In summary, we can understand the classical limits of both Minkowski and Rindler number operators and their associated Hamiltonians in a mathematically rigorous framework.  The classical limits of the Hamiltonians correspond to classical energy quantities that can be expressed as the integral of a familiar energy density.  And the classical limits of both number operators correspond to functions of the field and conjugate momentum that can likewise be expressed as the integral of a density function.\footnote{\citet{BrFeGaLiSo20} extend the same analysis to the electromagnetic field, in which case the classical limit of the number operator can also be written as the integral of a number density, which agrees with the proposal for classical photon number appearing in \citet{Se18,Se19}.}    The next section uses these results to aid our understanding of particles in light of the issues outlined in \S\ref{sec:back}.

\section{Emergent Localizable Particles}
\label{sec:part}

We now attempt to provide a positive account of the particle content of Klein-Gordon theory by interpreting the classical number operators and understanding them as approximations to quantum number operators.  To understand the localization properties of these classical number operators, we begin with an analogy to classical localizable energy quantities, which will motivate our approach to analyzing classical number operators.

Notice that there is a standard way to interpret the classical Hamiltonian as giving rise to localized energy quantities.  The classical limit of the Hamiltonian for the Klein-Gordon field is the integral of the standard energy density.  One can use this density to define local energy quantities $H_0^M(\Delta)$ for spatial sets $\Delta\subseteq\Sigma$ in the classical theory by restricting the integral of the energy density to the domain $\Delta$ as follows:
\begin{align}
\label{eq:locen}
H_0^M(\Delta)(\pi,\varphi) := \frac{1}{2} \int_\Delta \pi^2 + m^2\varphi^2 + (\nabla\varphi)^2
\end{align}
for $\pi,\varphi\in C_c^\infty(\Sigma)$.  Eq. (\ref{eq:locen}) provides the standard and familiar way in a classical field theory of describing the amount of localized energy within $\Delta$ associated with a field configuration.

With the definition of classical local energy in mind, we will proceed to provide two interpretations of the classical number of particles in which the total number of particles is the integral of a density function, and hence is localizable in precisely the same sense as energy.  Schematically, we will write this as
\begin{equation}
\label{eq:locschema}
    N_0(\Delta)(\pi,\varphi)\sim \int_{\Delta} n(\pi,\varphi)
\end{equation}
for a density function $n(\pi,\varphi)$, where $\Delta\subseteq\Sigma$ is a spatial set.  We will provide two candidates for the density function, which give rise to different particle interpretations of the classical Klein-Gordon theory.  On the first interpretation, which we call the ``Local Density" interpretation and describe in \S\ref{sec:loc}, we set the total integral of $n$ to be the classical total number operator $\overline{N}_0^M$ and take the density function to be that provided in Thm. \ref{thm:totnum}.  However, this yields an interpretation that is non-standard from the perspective of the quantum field theory by ignoring the role of particle modes, and so we provide a second interpretation based on particle modes.  We describe this interpretation, which we call the ``Uniform Density" interpretation, in \S\ref{sec:uni}, according to which the density function $n$ is itself the classical total number operator $\overline{N}_0^M$, understood as a uniform spatial density.  In this section, we treat only Minkowski particles, but we signal later some directions for analyzing Rindler particles.

\subsection{Local Density}
\label{sec:loc}
In this section, we will define the Local Density interpretation of the classical total number operator, compare it to the Newton-Wigner representation of the number operator in quantum field theory, and then establish that it obeys conditions analogous to those laid out by \citet{HaCl02} for systems of local number operators.

With the definition of classical local energy in mind, now consider the classical total number operator, which Thm. \ref{thm:totnum} shows can also be written as the integral of a density function.  This implies that we can in an analogous way define classical local number quantities $N_0^{LD}(\Delta)$ ($LD$ denotes ``local density'') by restricting the integral of the number density to the domain $\Delta$ as follows.  First, let\footnote{Here, $L^1(\Sigma)$ is the collection of integrable functions on $\Sigma$.  Except where explicitly noted as in \S\ref{sec:quant}, we understand all $L^p$ spaces to consist in \emph{complex-valued} functions.} $n:C^\infty(\Sigma)\oplus C^\infty(\Sigma)\to L^1(\Sigma)$ be such that for all $(\pi,\varphi)\in C^\infty(\Sigma)\oplus C^\infty(\Sigma)$,
\begin{align*}
    \overline{N}^M_0(\pi,\varphi) = \int_\Sigma n(\pi,\varphi).
\end{align*}

Note that $n$ is underdetermined here. We could, for example, take 
\begin{align}
\label{eq:locdens}
    n(\pi,\varphi) := \frac{1}{2}\Big((\mu_M^{-1/2}\pi)^2 + (\mu_M^{1/2}\varphi)^2\Big)
\end{align}
as it is written in Thm. \ref{thm:totnum}.  But we could also take 
\begin{align}
\label{eq:altlocdens}
    n'(\pi,\varphi) := \frac{1}{2}\Big(\pi(\mu_M^{-1}\pi) + \varphi(\mu_M\varphi)\Big)
\end{align}
since $\mu_M$ is self-adjoint and positive.  Which density is appropriate for physics is a substantive matter, but it makes no difference for most of the discussion that follows.  As such, we only assume that we have chosen one such $n$ for our Local Density interpretation; really we have a family of different interpretations for different densities with the same total integral.

With a choice of number density $n$, we fill in the schema of Eq. (\ref{eq:locschema}) by defining
\begin{align}
\label{eq:locnum}
    N_0^{LD}(\Delta)(\pi,\varphi) := \int_\Delta n(\pi,\varphi)
\end{align}
for $\pi,\varphi\in C_c^\infty(\Sigma)$.  This association of local number quantities in the classical field theory with local regions completes what we call the Local Density interpretation.

To better understand this interpretation, let us compare with what \citet[][p. 121-2]{Ha01a} proposes as the natural way in the corresponding quantum theory to construct a local number operator associated with a spatial region $\Delta$ from the number operators $N_\hbar^M(f)$ for test functions $f\in E$.  Halvorson interprets the test functions $f$ in the standard way as one-particle wavefunctions and interprets $N_\hbar^M(f)$ as the number of particles in a state with the wavefunction $f$.  On this interpretation, he suggests that the \emph{standard local number operator} associated with $\Delta$, which we call $N^S_\hbar(\Delta)$ should be the sum\footnote{In order for this sum to converge in the quantum theory ($\hbar>0$), one should actually understand it as an upper bound of quadratic forms in the Minkowski Fock space representation.} of number operators for one-particle wavefunctions with support in $\Delta$, i.e., he defines $N^S_\hbar(\Delta)$ for a spatial set $\Delta$ as
\begin{align}
\label{eq:standardloc}
    N^S_\hbar(\Delta):=\sum_ k N_\hbar^M(f_k),
\end{align}
where $\{f_k\}$ is a basis for the real vector space $C^\infty(\Delta)\oplus C^\infty(\Delta)$.

However, Halvorson shows, contra \citet{Re95b}, that the aforementioned definition does not yield appropriate local number operators.  Rather, for each spatial set $\Delta$, the defined sum yields the total number operator, i.e.,
\[N^S_\hbar(\Delta) = \overline{N}_\hbar^M\]
for any $\Delta$.  This follows immediately from a result of \citet{SeGo65} that the operator $\mu_M$ has the property they call \emph{anti-locality}.

Notice that our natural definition of the local number of particles in the classical field theory does \textit{not} agree with Halvorson's attempted definition of the local number operator in the quantum theory as a sum over the number operators associated with test functions with support in $\Delta$.   Even in the classical case, Halvorson's prescription yields the \emph{total} number operator. More precisely, suppose we define the standard local number operator in the classical theory as
\begin{align}
    N_0^S(\Delta):=\sum_k N^M_0(f_k),
\end{align}
where $\{f_k\}$ is a basis for the real vector space $C^\infty(\Delta)\oplus C^\infty(\Delta)$.  Then the anti-locality of $\mu_M$ established by  \citet{SeGo65} implies the standard local number operator is the total number operator for arbitrary local regions, i.e. for any spatial set $\Delta$,
\begin{align*}
    N_0^S(\Delta) = \overline{N}_0^M
\end{align*}
However, the local number quantities we have defined in Eq. (\ref{eq:locnum}) for the Local Density interpretation are clearly \textit{not} the total classical number operator, i.e.,
\[N_0^{LD}(\Delta)\neq \overline{N}_0^M,\] and so the local number quantities $N_0^{LD}(\Delta)$ we have defined cannot be obtained by summing over a basis of test functions with support in $\Delta$. Nevertheless, the local number quantities defined in Eq. (\ref{eq:locnum}) seem well-motivated in the classical theory.

In fact, the association of local number operators defined by Eq. (\ref{eq:locnum}) instead align with what Halvorson calls the \emph{Newton-Wigner localization scheme}\footnote{For background on the Newton-Wigner localization scheme, see \citet{NeWi49}, \citet{FlBu99}, and \citet{Fl00}.} in the quantum field theory \citep[][p. 123]{Ha01a}, which is obtained by transforming the test function space before summing over test functions with support in a region.  To start, we define a dense isometric embedding $K: E\to L^2(\Sigma)$ (where $E$ is understood with the inner product $\alpha_{J_M}$) by\footnote{See \citet[][p. 65]{Ka85} or \citet[][p. 115, Eq. 10]{Ha01a}.}
\begin{align}
    K(f,g) := \mu_M^{1/2}f + i\mu_M^{-1/2}g
\end{align}
for all $f,g\in C_c^\infty(\Sigma)$.  The Newton-Wigner localization scheme in the quantum theory for $\hbar>0$, by definition, understands $N_\hbar^M(F)$ for $F\in E$ to be local to a region $\Delta\subseteq \Sigma$ just in case $K(F)\in L^2(\Delta)\subseteq L^2(\Sigma)$, i.e., the points outside $\Delta$ for which $K(F)$ is non-zero form a region of measure zero.  Similarly, the Newton-Wigner total number operator in the quantum theory for the region $\Delta$ is then defined as\footnote{Again, in order for this sum to converge in the quantum theory ($\hbar>0$), one should understand it as an upper bound of quadratic forms in the Minkowski Fock space representation.}
\begin{align}
    N_\hbar^{NW}(\Delta) := \sum_k N_\hbar^M(F_k),
\end{align}
where $\{K(F_k)\}$ is an $L^2$-orthonormal basis for $L^2(\Delta)$ \citep[][p. 124]{Ha01a}.

We now define a corresponding Newton-Wigner localization scheme in the classical theory and show that it agrees with our Local Density interpretation.  For this, we must fix that our density is given by Eq. (\ref{eq:locdens}) rather than Eq. (\ref{eq:altlocdens}) or some other alternative.  Using the straightforward analogy provided in the $\hbar\to 0$ limit, we say the Newton-Wigner localization scheme in the classical theory, by definition, understands $N_0^M(F)$ for $F\in E$ to be local to a region $\Delta\subseteq\Sigma$ just in case $K(F)\in L^2(\Delta)\subseteq L^2(\Sigma)$.  And we define the Newton-Wigner total number operator in the classical theory for the region $\Delta$ as\footnote{The pointwise convergence on $E\subseteq E'$ of this sum is guaranteed by the following Prop. \ref{prop:NW}.}
\begin{align}
    N_0^{NW}(\Delta) :=\sum_k N_0^M(F_k),
\end{align}
where $\{K(F_k)\}$ is an $L^2$-orthonormal basis for $L^2(\Delta)$.  Then it follows that our local density interpretation reproduces the Newton-Wigner total number operators in the classical theory.

\begin{prop}
\label{prop:NW}
For any $\Delta\subseteq \Sigma$ and any $\pi,\varphi\in C_c^\infty(\Sigma)$, if $N_0^{LD}(\Delta)$ is defined using the density in Eq. (\ref{eq:locdens}), then
\begin{align*}
    N_0^{LD}(\Delta)(\pi,\varphi) = N_0^{NW}(\Delta)(\pi,\varphi).
\end{align*}
\end{prop}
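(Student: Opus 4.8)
The plan is to reduce the claim to a single computation in the Hilbert space $L^2(\Sigma)$. The crucial observation is that, for $\pi,\varphi\in C_c^\infty(\Sigma)$ and $F\in E$, evaluating the classical number quantity $N_0^M(F)$ at the field configuration $(\pi,\varphi)$ reproduces $\tfrac12\,|\langle K(F),\Psi\rangle_{L^2}|^2$, where $\Psi:=\mu_M^{-1/2}\pi+i\,\mu_M^{1/2}\varphi\in L^2(\Sigma)$ is the ``Newton-Wigner data'' attached to $(\pi,\varphi)$; the factor $\tfrac12$ comes from the $1/\sqrt2$ in Eq.~(\ref{eq:numberdef}) and is also the one that makes the $\Delta=\Sigma$ case match Thm.~\ref{thm:totnum}. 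Granting this, the sum defining $N_0^{NW}(\Delta)(\pi,\varphi)$ over an $L^2$-orthonormal basis $\{K(F_k)\}$ of the closed subspace $L^2(\Delta)\subseteq L^2(\Sigma)$ is nothing but a Parseval sum for the orthogonal projection $P_\Delta$ of $L^2(\Sigma)$ onto $L^2(\Delta)$; since $P_\Delta$ is multiplication by the indicator function $\chi_\Delta$, this collapses to $\tfrac12\|P_\Delta\Psi\|_{L^2}^2=\tfrac12\int_\Delta|\Psi|^2=\tfrac12\int_\Delta (\mu_M^{-1/2}\pi)^2+(\mu_M^{1/2}\varphi)^2$, which is precisely $\int_\Delta n(\pi,\varphi)$ for the density $n$ of Eq.~(\ref{eq:locdens}). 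This is also the point at which Eq.~(\ref{eq:locdens}) rather than Eq.~(\ref{eq:altlocdens}) is forced: the two densities have the same integral only over all of $\Sigma$, whereas the Parseval computation singles out the pointwise density $\tfrac12|\Psi|^2$.

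The first step is therefore to verify that $N_0^M(F)(\pi,\varphi)=\tfrac12\big(\Phi_0(F)(\pi,\varphi)^2+\Phi_0(J_M F)(\pi,\varphi)^2\big)=\tfrac12|\langle K(F),\Psi\rangle_{L^2}|^2$. Writing $F=(f,g)$ and using $\Phi_0(f,g)(\pi,\varphi)=\int_\Sigma \pi f+\varphi g$, one inserts $\mu_M^{-1/2}\mu_M^{1/2}=I$ into the $\pi f$ term and $\mu_M^{1/2}\mu_M^{-1/2}=I$ into the $\varphi g$ term; since $\mu_M^{\pm1/2}$ are self-adjoint on $L^2(\Sigma)$ --- with $\mu_M^{-1/2}$ bounded because $\mu_M\ge m>0$, so that $\mu_M^{-1/2}\pi,\mu_M^{1/2}\varphi\in L^2(\Sigma)$ --- this gives $\Phi_0(F)(\pi,\varphi)=\mathrm{Re}\,\langle K(F),\Psi\rangle$. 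A parallel computation using $J_M(f,g)=(-\mu_M^{-1}g,\mu_M f)$ --- equivalently, the identity $K(J_M F)=iK(F)$, which records that $K$ intertwines $J_M$ with multiplication by $i$ on $L^2(\Sigma)$ --- gives $\Phi_0(J_M F)(\pi,\varphi)=\mathrm{Im}\,\langle K(F),\Psi\rangle$, up to an immaterial sign. Squaring and adding yields the identity.

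With the identity in hand, $N_0^{NW}(\Delta)(\pi,\varphi)=\sum_k N_0^M(F_k)(\pi,\varphi)=\tfrac12\sum_k|\langle K(F_k),\Psi\rangle|^2$. Bessel's inequality bounds this by $\tfrac12\|\Psi\|_{L^2}^2<\infty$, which simultaneously establishes the pointwise convergence flagged in the footnote, and Parseval's identity for the orthonormal basis $\{K(F_k)\}$ of the closed subspace $L^2(\Delta)$ upgrades the bound to the equality with $\tfrac12\|P_\Delta\Psi\|^2$, completing the computation sketched above.

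I expect the main obstacle to be the functional-analytic bookkeeping attached to the unbounded operator $\mu_M$ rather than any conceptual difficulty. One must check that the manipulation producing $\langle K(F),\Psi\rangle=\Phi_0(F)(\pi,\varphi)\pm i\,\Phi_0(J_M F)(\pi,\varphi)$ is legitimate on the nose --- in particular that $\Psi\in L^2(\Sigma)$ is the image, under the unitary extension $\overline{K}\colon\mathcal{H}_{J_M}\to L^2(\Sigma)$, of the (not compactly supported) pair $(\mu_M^{-1}\pi,\mu_M\varphi)$, so that the basis vectors $F_k$ may be taken in $\mathcal{H}_{J_M}$ and each $N_0^M(F_k)$ is well-defined through the extension of $\mathcal{Q}_\hbar$ described in \S\ref{sec:quant} --- and, most delicately, that the infinite sum computed via that extension genuinely coincides with the Hilbert-space Parseval sum. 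This last point is exactly the kind of pointwise-convergence statement established by \citet{BrFeGaLiSo20} and already invoked in the proof of Thm.~\ref{thm:totnum}, so it should be available; the only remaining loose end is careful tracking of the overall constant, which the $\Delta=\Sigma$ specialization pins down.
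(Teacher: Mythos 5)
Your proposal is correct and follows essentially the same route as the paper's own proof: the paper likewise rewrites each summand $N_0^M(F_k)(\pi,\varphi)$ as $\tfrac12\abs{\inner{K(\varphi,-\pi)}{K(F_k)}_{L^2(\Sigma)}}^2$ (your $\Psi$ is just $iK(\varphi,-\pi)$, an immaterial phase), restricts to $L^2(\Delta)$, and applies the Pythagorean/Parseval identity for the orthonormal basis $\{K(F_k)\}$ to land on $\tfrac12\int_\Delta(\mu_M^{-1/2}\pi)^2+(\mu_M^{1/2}\varphi)^2$. Your additional remarks on Bessel's inequality and the role of the density in Eq.~(\ref{eq:locdens}) versus Eq.~(\ref{eq:altlocdens}) are accurate but do not change the argument.
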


\begin{proof}
The Pythagorean theorem on $L^2(\Delta)$ implies that,\footnote{Here, we use $\inner{\cdot}{\cdot}_{L^2(X)}$ to denote the inner product in $L^2(X)$.} taking $\{F_k\} = \{f_k,g_k\}$ to be an $L^2$- orthonormal basis for $L^2(\Delta)$,
\begin{align*}
    N_0^{NW}(\Delta)(\pi,\varphi) &= \frac{1}{2}\sum_k\bigg[\big(\int_\Sigma\pi f_k + \varphi g_k\big)^2 + \big(\int_\Sigma (\mu_M^{1/2}\varphi)(\mu_M^{1/2}f_k) - (\mu_M^{-1/2}\pi)(\mu_M^{-1/2}g_k)\big)^2\bigg]\\
    &=\frac{1}{2}\sum_k \abs{\inner{K(\varphi,-\pi)}{K(F_k)}_{L^2(\Sigma)}}^2\\
    &= \frac{1}{2}\sum_k \abs{\inner{K(\varphi,-\pi)}{K(F_k)}_{L^2(\Delta)}}^2\\
    &= \frac{1}{2}\inner{K(\varphi,-\pi)}{K(\varphi,-\pi)}_{L^2(\Delta)}\\
    &=\frac{1}{2}\int_\Delta (\mu_M^{1/2}\varphi)^2 + (\mu_M^{-1/2}\pi)^2\\
    &= N_0^{LD}(\Delta)(\pi,\varphi),
\end{align*}
which is what we set out to show.  \end{proof}
\noindent This establishes that comparison with the construction of local energy quantities suggests an interpretation of number operators in the classical theory---which we have called the Local Density interpretation---which aligns with the Newton-Wigner localization scheme.\footnote{An anonymous referee has asked if one can use the methods in \citet{BrFeGaLiSo20} to analyze the classical limit of the Newton-Wigner \emph{position} operator directly.  If successful, this might provide a way around Malament's ``no-go" theorem in the classical limit.  But the methods for taking the classical limit described here are applicable only to the Weyl algebra, so these methods willow allow one to take the classical limit of the Newton-Wigner position operator only if the Newton-Wigner position operator can itself be related to limits of algebraic combinations of Weyl unitaries.  This is an interesting question, but an answer is beyond the scope of the current paper and so we leave it for future work.}  So despite the negative pronouncements by \citet[][e.g., p. 131-132]{Ha01a}, the classical limit may provide some reason to favor the Newton-Wigner localization scheme, although we take no stance on whether such reasons apply in the corresponding quantum theory.  (We remind the reader at this point that our interpretation of the classical number operators is meant to aid in the understanding of quantum field theory only by showing what a world approximately described by quantum field theory would approximately be like---not by showing what a world described exactly by quantum field theory would be like.  As such, we have not provided any reason to advocate for the Newton-Wigner localization scheme in quantum field theory.)

Moreover, we can also see that the quantities $N_0^{LD}(\Delta)$ are at least appropriate \emph{candidates} for local number operators by demonstrating that the assignment $\Delta\mapsto N_0^{LD}(\Delta)$ satisfies analogs of the necessary conditions (1)-(3) and (5)-(6) of Halvorson and Clifton's no-go theorem.  We first need some preliminaries.  Since we work with initial data on a spacelike hypersurface $\Sigma\cong\mathbb{R}^3$ embedded in $M$, we will use the orthogonal decomposition of each vector $a$ in $M$ into
\[a = t\xi + \eta,\]
where $\xi$ is the unit future-directed timelike vector orthogonal to $\Sigma$, $t\in\mathbb{R}$ is a scalar, and $\eta$ is the spacelike component of $a$ tangent to the hyperplane $\Sigma$.  Recall that the one-parameter unitary family $e^{-i\mu_M t}$ for $t\in\mathbb{R}$ implements the dynamical evolution for the Klein-Gordon field for inertial time translations \citep[][p. 65]{Ka85}.  We extend the assignment $\Delta\subseteq\Sigma\mapsto N_0^{LD}(\Delta)$ to arbitrary spatial sets $\Delta\nsubseteq\Sigma$ (we assume $\Sigma$ belongs to the collection of spacelike hyperplanes foliating $M$) by defining
\begin{align}
    N_0^{LD}(\Delta)(\pi,\varphi):= N_0^{LD}(\Delta - t\xi)(e^{-i\mu_Mt}\pi,e^{-i\mu_Mt}\varphi)
\end{align}
where $t\xi$ is the unique vector beginning on $\Sigma$ and ending on $\Delta$ that is orthogonal to $\Sigma$.

Now we consider Halvorson and Clifton's conditions, beginning with Translation Covariance.  For each vector $a = t\xi + \eta$ define the map $\beta^a: E'\to E'$ acting by
\begin{align}
    \beta^a(\pi,\varphi) := (e^{-i\mu_M t}\eta_*\pi,e^{-i\mu_M t}\eta_*\varphi),
\end{align}
for all $(\pi,\varphi)\in C_c^\infty(\Sigma)\oplus C_c^\infty(\Sigma)$, where $\eta_*$ is the pushforward for the translation by the vector $\eta$.\footnote{One can extend $\beta^a$ to distributional field configurations in $E'$ in an obvious way, but we will only be concerned with the values of number operators on smooth field configurations.}  This allows us to define automorphisms $\beta_a: \mathcal{W}(E,0)\to \mathcal{W}(E,0)$ by linearly extending
\begin{align}
    \beta_a(W_0(f))(\pi,\varphi) := W_0(f)(\beta^a(\pi,\varphi)),
\end{align}
which likewise extends to the number operators as
\begin{align}
    \beta_a(N_0^M(f))(\pi,\varphi):= N_0^M(f)(\beta^a(\pi,\varphi))
\end{align}
for all $f\in E$ and
\begin{align}
    \beta_a(n)(\pi,\varphi) := n(\beta^a(\pi,\varphi)).
\end{align}
It follows from the linearity of $\beta_a$, the translation invariance of $\mu_M$, and the translation invariance of the measure on $\Sigma$ defining the integral that
\begin{align}
    \beta_a(N_0^{LD}(\Delta)) = N_0^{LD}(\Delta+a),
\end{align}
which is an analogue of Translation Covariance in the classical theory.

The classical theory furthermore satisfies the Energy Condition because $\mu_M$ is the self-adjoint generator of the group of inertial time translations and its spectrum is $[m,\infty)$, which is bounded from below.  One might worry that the positivity of the operator $\mu_M$ does not provide an appropriate notion of energy positivity in the classical theory.  But the classical energy density of the Klein-Gordon theory, which is the generator of time translations in the distinct sense of being the associated conserved quantity via Noether's theorem, is also positive in the analogous sense of satisfying the weak energy condition.

Microcausality is trivially satisfied because the algebra of observables for the classical theory is commutative.  One might wish to investigate further the correlations between the expectation values of local number operators associated with spacelike separated regions.  It follows from the properties of $\mu_M$ that $N_0^{LD}(\Delta)$ can, in some sense, depend on the values of $\varphi$ and $\pi$ outside of the region $\Delta$.  This may provide difficulties for understanding the local number operators as quantities measurable by probing the field within the region $\Delta$.  An analogous feature rears its head in the discussion \citet[][p. 128]{Ha01a} gives of the Newton-Wigner local number operators in the quantum theory, which he shows do not satisfy Microcausality and thereby admit ``act-outcome" correlations at spacelike separation.  We leave such considerations for future work, but it is unclear whether there is reason for thinking that this non-local dependence of the classical number operators on the fields violates the constraints of relativity theory.

Linearity of the integral implies that Additivity is satisfied.  Similarly, the countable additivity of the integral implies that for any countable disjoint covering $\{\Delta_n\}$ of $\Sigma$,
\begin{align}
    \overline{N}^{LD}_0 = \sum_n N_0^{LD}(\Delta_n),
\end{align}
which means that the sum\footnote{Again, this sum should be understood in the sense of pointwise convergence on $E\subseteq E'$.} of local number operators converges to the total number operator.  To establish Number Conservation, note that when $\pi,\varphi\in C_c^\infty(\Sigma)$ are smooth field configurations, we can treat $(\pi,\varphi)$ as an element of $E$ for which
\[\overline{N}_0^{LD}(\pi,\varphi) = \overline{N}_0^M(\pi,\varphi) = \frac{1}{2}\alpha_{J_M}((\varphi,-\pi),(\varphi,-\pi)).\]
Further, for any vector $a$, the operator $\beta^a$ on $E$ is a unitary operator (with respect to $\alpha_{J_M}$).  It follows from these two facts that $\beta_a(\overline{N}_0^{LD}) = \overline{N}_0^{LD}$ for any timelike vector $a$.  This shows that the total number operator is conserved under time translations, so an analogue of the Number Conservation is satisfied.  Thus, the classical local number quantities satisfy all of Halvorson and Clifton's conditions, which they claim are at least necessary conditions for an assignment of local number quantities.

The Local Density interpretation allows one to associate a quantity $N_0^{LD}(\Delta)$ with each region $\Delta$, which one can interpret physically as the number of particles in the region $\Delta$ according to the classical theory.  Saying that particles are localizable in this sense is only to say that these quantities satisfy the weak necessary conditions for associating physical quantities with regions of a relativistic spacetime.  This does not, of course, answer any questions about the ontological status of particles in any fundamental quantum field theory.  Instead, it answers the question outlined in the introduction about what the world is approximately like if quantum field theory is approximately accurate: the world would be approximately described by a classical field theory, admitting an interpretation of certain quantities associated with the fields as the number of particles in a region.

Thus, we believe the Local Density interpretation of the number operators in the classical limit provides a sense in which number operators in the quantum field theory are \emph{approximately localizable}.  The sense in which the number operators in the quantum field theory are approximately localizable is captured by the fact that they approximate classical number operators by satisfying the limiting conditions of strict quantization.  Further, the classical number operators on the Local Density interpretation are localizable because (i) they satisfy Halvorson and Clifton's conditions, and (ii) their localization scheme matches standard ways of understanding the localizability of global quantities like energy that can be expressed as the integral of a density function.  Thus, the Local Density interpretation provides one sense in which localizable number operators emerge in the classical limit of quantum field theory.

\subsection{Uniform Density}
\label{sec:uni}

In this section, we will define an alternative interpretation of particles in the classical Klein-Gordon theory---which we call the Uniform Density interpretation of the classical total number operator.  We will motivate this interpretation by showing the classical total number operator can be obtained as the sum over modes of the particle contents naturally associated with the Fourier modes of a Klein-Gordon field.

One might be dissatisfied with the Local Density interpretation of the previous section \emph{precisely because} its fails to match the natural prescription Halvorson describes as the standard way of defining local number operators in the quantum theory, which we reviewed in the previous section.  The standard prescription Halvorson describes for associating number operators with local regions is, after all, based on the standard interpretation of particle modes in the quantum theory.  In this section, we show that one can stay much closer to the prescriptions suggested in the quantum theory when defining classical number quantities by analyzing classical particle modes.  Doing so yields a distinct interpretation, which we call the Uniform Density interpretation.  We give the interpretation this name because we will show that it implies that the total classical number operator $\overline{N}_0^M$ should be understood as a uniform spatial density.  Using $N_0^{UD}(\Delta)$ to denote the number of particles in a spatial region $\Delta$, the Uniform Density interpretation fills in Eq. (\ref{eq:locschema}) by prescribing the association
\begin{align}
    N_0^{UD}(\Delta)(\pi,\varphi) := \int_{\Delta}\overline{N}_0^M(\pi,\varphi).
\end{align}
This association satisfies Halvorson and Clifton's conditions on systems of local number operators in just the same way as the Local Density interpretation, so we will not show this explicitly.  Instead, we focus on motivating the Uniform Density interpretation.

Our motivation comes from the standard way of thinking in quantum field theory according to which different number operators $N^M_\hbar(f)$ correspond to the number of particles in different ``modes" of the field, where modes are understood as components of the Fourier decomposition (in turn corresponding to the inertial timelike symmetries we used to define $\mu_M$ to begin with).  Of course, we can apply this approach to the classical field theory using standard methods of Fourier analysis.  We will proceed by noticing that each Fourier mode of a classical field can be reinterpreted as a relativistic fluid associated with a constant particle number density.  Then we will show that summing the corresponding constant number densities over all modes reproduces the classical total number quantity $\overline{N}_0^M$, thus motivating our understanding it as a constant density.

First, some preliminaries.  To define the Fourier transform, we fix an arbitrary origin to Minkowski spacetime $o\in M$ and understand the position vector of any other point $p\in M$ to be the unique vector $x$ such that $p = o + x$.  (We will show later that the choice of origin does not make a difference to what follows.)  For any scalar field $\varphi\in \mathcal{S}(M)$,\footnote{Here, $\mathcal{S}(M)$ denotes the Schwartz space of rapidly decreasing functions on $M$.} define the Fourier transform, denoted $\tilde{\varphi} \in \mathcal{S}(M)$ by
\begin{align}
    \tilde{\varphi}(k) := \frac{1}{(2\pi)^2}\int_{M}\varphi(x)e^{-ig(k,x)}\ dx
\end{align}
for all vectors $k$ in $M$, where $g$ denotes the Minkowski metric.\footnote{We work with signature $(+,-,-,-)$ for the Minkowski metric $g$.}  The Fourier inversion theorem implies that any solution $\varphi$ to the Klein-Gordon equation (for fixed mass $m$) can be written in the form
\begin{align}
    \varphi(x) = \frac{1}{(2\pi)^2}\int_{S} \tilde{\varphi}(k)e^{ig(k,x)}\ dk,
\end{align}
where the domain of the integral is the \emph{mass shell} $S:=\{k\ |\ g(k,k) = m^2\}$.

Moreover, since $\varphi$ is real, we have $\tilde{\varphi}(-k) = \overline{\tilde{\varphi}(k)}$, which implies that $\varphi$ takes the form
\[\varphi(x) = \frac{1}{(2\pi)^2}\int_{S\cap I^+} 2 Re(\tilde{\varphi}(k))\cos(g(k,x)) + 2Im(\tilde{\varphi}(k))\sin(g(k,x))\ dk,\]
where $I^+$ is the collection of future-directed timelike vectors.  We call the integrand the \emph{$k$-mode of $\varphi$}, explicitly given by
\begin{align}
 \overset{k}{\varphi}(x) := 2 Re(\tilde{\varphi}(k))\cos(g(k,x)) + 2Im(\tilde{\varphi}(k))\sin(g(k,x)).
\end{align}
We define the $k$-modes of $\varphi$ so that
\begin{align*}
\varphi(x) = \frac{1}{(2\pi)^2}\int_{S\cap I^+} \overset{k}{\varphi}(x)\ dk.
\end{align*}
This provides the familiar sense in which $\varphi$ can be thought of as a sum over modes, where each mode is itself a real-valued solution to the Klein-Gordon equation associated with some $k\in S\cap I^+$.

We now analyze each $k$-mode $\overset{k}{\varphi}$ as a solution to the Klein-Gordon equation in its own right.  It is well known that Klein-Gordon fields can be canonically associated with relativistic perfect fluids \citep[See, e.g.,][]{Ma88}.  We understand the fluid corresponding to a Klein-Gordon field $\varphi$ to be described by three quantities: the energy density $\rho$, pressure density $p$, and velocity field $u$, given by
\begin{align}
    \rho &:= \frac{1}{2}(g(\nabla\varphi,\nabla\varphi) + m^2\varphi^2)\\
    p &:= \frac{1}{2}(g(\nabla\varphi,\nabla\varphi) - m^2\varphi^2)\\
    u &:= \frac{\nabla\varphi}{g(\nabla\varphi,\nabla\varphi)^{1/2}},
\end{align}
where here, and in the remainder of this section, $\nabla$ denotes the unique (Levi-Civita) covariant derivative operator on $M$ compatible with $g$ \citep[See][p. 77, Prop. 1.9.2]{Ma12}.  With these definitions, the stress-energy tensor for the Klein-Gordon field takes the same form as the stress-energy tensor for a perfect fluid.  We denote the corresponding energy density, pressure density, and velocity field for the $k$-mode $\overset{k}{\varphi}$ by $\overset{k}{\rho}$, $\overset{k}{p}$, and $\overset{k}{u}$.

Straightforward formal calculation yields
\begin{align*}
    \overset{k}{\rho} &= 2m^2\overline{\tilde{\varphi}(k)}\tilde{\varphi}(k)\\
    \overset{k}{u} &= \frac{k}{m}.
\end{align*}
There is a technical issue with this result for $\overset{k}{\rho}$ that we return to below.  But for the moment notice that this establishes (at least formally) that the fluid corresponding to the $k$-mode has energy density and velocity that are constant across spacetime.

We can use the quantities associated with the interpretation of a scalar field as a fluid to define a particle density.  First, we understand each $k$-mode as a fluid composed of particles of mass $m$.  A co-moving observer (moving with velocity $\overset{k}{u}$) will thus naturally assign the $k$-mode a particle density $\overset{k}{n}_u$ given by
\begin{align}
\overset{k}{n}_u := \frac{\overset{k}{\rho}}{m}.
\end{align}
(Notice that we work in natural units where the speed of light is $c=1$, which simplifies the correspondence between mass and energy.)  But it is well known that the co-moving particle density $\overset{k}{n}_u$ is not relativistically invariant, in the sense that different observers may assign different particle densities.  Following \citet[][p. 49]{We72}, we define, for each $k$-mode, the invariant particle current density as the vector field
\begin{align}
    \overset{k}{N} := \overset{k}{n}_u \overset{k}{u}.
\end{align}
Indeed, $\overset{k}{N}$ is a conserved quantity in the sense that 
\begin{align*}
    \text{div\ } \overset{k}{N} &= g\big(\frac{\overset{k}{{u}}}{m},\nabla\overset{k}{\rho}\big) + \frac{\overset{k}{\rho}}{m} \text{div\ } \overset{k}{u} = -\frac{\overset{k}{p}}{m}\text{div\ }\overset{k}{u} = 0,
 \end{align*}
where the second to last equality follows from the fact that the stress-energy tensor of the Klein-Gordon field is conserved \citep[][p. 150, Eq. 2.5.5]{Ma12}, and the last equality follows from the fact that the velocity field $\overset{k}{u}$ is constant.

At this point, it is important to note that the particle content of each $k$-mode, encoded in the current density $\overset{k}{N}$, is independent of the choice of origin $o\in M$.  For, if we had fixed a different origin $o\mapsto o' = o + v\in M$, we would only change the Fourier transform by a phase factor
\begin{align*}
    \tilde{\varphi}(k)\mapsto e^{-ig(k,v)}\tilde{\varphi}(k),
\end{align*}
leaving the absolute modulus $\overline{\tilde{\varphi}(k)}\tilde{\varphi}(k)$ invariant.  Thus, a change of origin does not affect the energy density $\overset{k}{\rho}$ or velocity $\overset{k}{u}$ assigned to each $k$-mode, and in turn does not affect the co-moving particle density $\overset{k}{n_u}$ or current density $\overset{k}{N}$.  Hence, although we fixed a choice of origin to define the Fourier transform, the conclusions we draw about particle content are independent of that choice.\footnote{On the other hand, notice that the analysis of this section depends entirely on the use of plane waves as Fourier modes, which are associated with the inertial timelike symmetries of Minkowski spacetime.  This convention \emph{cannot} be changed arbitrarily like the choice of origin.}

Now the particle current for the $k$-mode allows us to define the particle density for the $k$-mode associated with an observer moving with an arbitrary velocity $\xi\in I^+$ as
\begin{align}
    \overset{k}{n_\xi}:= g(\overset{k}{N},\xi).
\end{align}
Since the full field $\varphi$ is itself an integral of the $k$-modes, where $k$ ranges over the future-directed mass shell $S\cap I^+$, we want to associate with an observer moving with velocity $\xi\in I^+$ a total particle density given by the integral of $n_\xi$ over all possible values of $k$.  This will, however, require one technical change.

Our basic strategy is to use $\xi$ to decompose each vector $k$ into a frequency and wave-vector component to put $\overset{k}{n}_\xi$ is a form more amenable to calculations.  To that end, define the frequency $\overset{k}{\omega}$ and wave-number $\mathbf{k}$ relative to the observer with velocity $\xi\in I^+$ by
\begin{align}
\label{eq:obs}
\overset{k}{\omega} := g(\xi,k) && \mathbf{k} = k - g(\xi,k)\xi.
\end{align}
We also define
\begin{align}
   (\mathbf{k}\cdot\mathbf{k}) := -g(\mathbf{k},\mathbf{k}).
\end{align}
so that
\begin{align}
    \overset{k}{\omega} = (m^2 + (\mathbf{k}\cdot\mathbf{k}))^{1/2}.
\end{align}

Similarly, we understand $\Sigma$ to be the spatial slice associated with the observer moving in the direction $\xi$, i.e., we let $\Sigma$ be the collection of spacelike vectors orthogonal to $\xi$.\footnote{Or equivalently, we can let $\Sigma$ consist in points $p\in M$ with $p = o + \mathbf{x}$ for some spacelike vector $\mathbf{x}$ orthogonal to $\xi$.  The relevant quantities are all invariant under time translations, so it does not matter which surface we choose from a spacelike foliation.}  Similarly, given any $\varphi\in C_c^\infty(M)$, we associate with the observer moving in the direction $\xi$ the standard field momentum $\pi = g(\xi,\nabla\varphi)$, using the time derivative with respect to the proper time of the observer.

Then calculating from our definitions with the Fourier transform yields
\begin{align*}
    \overset{k}{n}_\xi &= 2(m^2 + (\mathbf{k}\cdot\mathbf{k}))^{1/2} \cdot \overline{\tilde{\varphi}(k)}\tilde{\varphi}(k)\\
    \tilde{\pi}(k) &= i(m^2 + (\mathbf{k}\cdot\mathbf{k}))^{1/2}\tilde{\varphi}(k).
\end{align*}

It follows that we can write the particle density for the $k$-mode as
\begin{align}
\label{eq:unidens}
    \overset{k}{n}_\xi = (m^2 + (\mathbf{k}\cdot\mathbf{k}))^{1/2}\cdot\overline{\tilde{\varphi}(k)}\tilde{\varphi}(k) + (m^2 + (\mathbf{k}\cdot\mathbf{k}))^{-1/2}\cdot \overline{\tilde{\pi}(k)}\tilde{\pi}(k)
\end{align}

Now the technical issue mentioned earlier is that the products $\overline{\tilde{\varphi}(k)}\tilde{\varphi}(k)$ and $\overline{\tilde{\pi}(k)}\tilde{\pi}(k)$ are not generally well-defined---even as distributions---because they may involve products of delta functions.  To see this, first define the spatial Fourier transform for initial conditions $\varphi\in\mathcal{S}(\Sigma)$ on the spacelike hyperplane $\Sigma$ as
\begin{align}
    \mathcal{F}(\varphi)(\mathbf{k}) := \frac{1}{(2\pi)^2}\int_\Sigma \varphi(\mathbf{x})e^{i\mathbf{k}\cdot\mathbf{x}}\ d\mathbf{x}.
\end{align}
In other words, $\mathcal{F}$ is the ordinary Fourier transform in $\mathbb{R}^3$.  It follows that the spatial Fourier transform is related to the full Fourier transform by
\begin{align*}
    \tilde{\varphi}(k) = \mathcal{F}(\varphi)(\mathbf{k})\cdot \delta(m^2 + \mathbf{k}^2),
\end{align*}
where we use the same convention in Eq. (\ref{eq:obs}) for associating each $k$ with a wave-number $\mathbf{k}$.  So if we have initial conditions $\varphi\in \mathcal{S}(\Sigma)$ (which are required for our definition of the number operator), then since  $\overline{\tilde{\varphi}(k)}\tilde{\varphi}(k)$ and $\overline{\tilde{\pi}(k)}\tilde{\pi}(k)$ contain products of delta functions, then, they are not well-defined.

This provides some motivation for working instead with a formally analogous, but well-defined expression for the number density that is obtained by using only the spatial Fourier transform.  We define, in analogy with Eq. (\ref{eq:unidens}), a ``spatial" number density
\begin{align*}
    \overset{\mathbf{k}}{n}_\xi := (m^2 + (\mathbf{k}\cdot\mathbf{k}))^{1/2}\cdot \overline{\mathcal{F}(\varphi)(\mathbf{k})}\mathcal{F}(\varphi)(\mathbf{k}) + (m^2 + \mathbf{k}\cdot\mathbf{k}))^{-1/2}\overline{\mathcal{F}(\pi)(\mathbf{k})}\mathcal{F}(\pi)(\mathbf{k}).
\end{align*}
This definition in turn allows us to integrate over all $k$-modes by integrating over all possible values of $\mathbf{k}$, understood as associated with the frequency $(m^2 + (\mathbf{k}\cdot\mathbf{k}))^{1/2}$.

As such, we define
\begin{align}
    n_\xi := \int_{S\cap I^+} \overset{\mathbf{k}}{n}_\xi\ d\mathbf{k}.
\end{align}
Since the particle density $\overset{k}{n}_\xi$ in the direction $\xi$ for each $k$-mode is a constant scalar field, so is the total particle density in the direction $\xi$.  In this sense, our interpretation of the $k$-modes of $\varphi$ as perfect fluids yields a local, but \emph{uniform} particle density in spacetime.

Finally, we now establish that, on this definition, the total particle density for an observer with velocity $\xi\in I^+$ agrees with the total classical number operator $\overline{N}_0^M$ at every point.  We have the following result.

\begin{prop}
For any $\varphi\in C_c^\infty(M)$, with the above definitions of $\Sigma$ and $\pi$,
\begin{align*}
    n_\xi(\mathbf{x}) =  \overline{N}_0^M(\pi_{|\Sigma},\varphi_{|\Sigma})
\end{align*}
for all $\mathbf{x}\in \Sigma$.
\end{prop}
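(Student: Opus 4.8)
The plan is to reduce the claimed identity to Thm.~\ref{thm:totnum} by passing to the spatial Fourier transform $\mathcal{F}$ on $\Sigma$, using that $\mu_M$ acts as a Fourier multiplier and that $\mathcal{F}$ is, up to a fixed normalization constant, an isometry of $L^2(\Sigma)$. The whole argument is then essentially a single application of Plancherel's theorem together with the spectral calculus for $\mu_M$, plus careful bookkeeping of constants.

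First I would note that the clause ``for all $\mathbf{x}\in\Sigma$'' carries no content: for each fixed $\mathbf{k}$ the quantities $\overline{\mathcal{F}(\varphi_{|\Sigma})(\mathbf{k})}\mathcal{F}(\varphi_{|\Sigma})(\mathbf{k})$ and $\overline{\mathcal{F}(\pi_{|\Sigma})(\mathbf{k})}\mathcal{F}(\pi_{|\Sigma})(\mathbf{k})$ are numbers, so $\overset{\mathbf{k}}{n}_\xi$ and hence $n_\xi=\int_{S\cap I^+}\overset{\mathbf{k}}{n}_\xi\,d\mathbf{k}$ are single real numbers --- this is precisely the spatial uniformity already observed for the mode densities --- and $\overline{N}_0^M(\pi_{|\Sigma},\varphi_{|\Sigma})$ is likewise a number. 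Hence it suffices to prove the scalar equality $n_\xi=\overline{N}_0^M(\pi_{|\Sigma},\varphi_{|\Sigma})$. I would also record that $\varphi\in C_c^\infty(M)$ forces $\varphi_{|\Sigma},\pi_{|\Sigma}\in\mathcal{S}(\Sigma)$, so every Fourier integral below and the momentum integral converge absolutely and no distributional subtleties survive (this being exactly why $\overset{k}{n}_\xi$ was replaced by its ``spatial'' analogue $\overset{\mathbf{k}}{n}_\xi$).

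Next I would expand the right-hand side using Thm.~\ref{thm:totnum},
\[
\overline{N}_0^M(\pi_{|\Sigma},\varphi_{|\Sigma}) = \frac{1}{2}\int_\Sigma (\mu_M^{-1/2}\pi_{|\Sigma})^2 + (\mu_M^{1/2}\varphi_{|\Sigma})^2,
\]
and rewrite it in momentum space. Since $\mu_M=(m^2-\nabla^2)^{1/2}$ acts on $\mathcal{S}(\Sigma)$ as multiplication by $(m^2+\mathbf{k}\cdot\mathbf{k})^{1/2}=\overset{k}{\omega}$ under $\mathcal{F}$, one has $\mathcal{F}(\mu_M^{s}\psi)(\mathbf{k}) = (m^2+\mathbf{k}\cdot\mathbf{k})^{s/2}\mathcal{F}(\psi)(\mathbf{k})$ for $s=\pm\frac12$; applying Plancherel's theorem for $\mathcal{F}$ on $\Sigma\cong\mathbb{R}^3$ to each term turns the displayed integral into
\[
\frac{1}{2}C_{\mathcal{F}}\int\Big[(m^2+\mathbf{k}\cdot\mathbf{k})^{1/2}\,\overline{\mathcal{F}(\varphi_{|\Sigma})(\mathbf{k})}\mathcal{F}(\varphi_{|\Sigma})(\mathbf{k}) + (m^2+\mathbf{k}\cdot\mathbf{k})^{-1/2}\,\overline{\mathcal{F}(\pi_{|\Sigma})(\mathbf{k})}\mathcal{F}(\pi_{|\Sigma})(\mathbf{k})\Big]\,d\mathbf{k},
\]
where $C_{\mathcal{F}}$ is the Plancherel constant attached to the normalization $\mathcal{F}(\psi)(\mathbf{k})=(2\pi)^{-2}\int_\Sigma\psi\,e^{i\mathbf{k}\cdot\mathbf{x}}\,d\mathbf{x}$. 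The bracketed integrand is exactly $\overset{\mathbf{k}}{n}_\xi$, and identifying the forward mass shell $S\cap I^+$ with the wave-vectors $\mathbf{k}$ via $k\mapsto\mathbf{k}=k-g(\xi,k)\xi$, the integral is $n_\xi$. Matching the two sides then reduces to verifying that $\frac12 C_{\mathcal{F}}$, together with whatever constant is implicit in ``$\int_{S\cap I^+}\cdots\,d\mathbf{k}$'', equals $1$.

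The step I expect to be the main obstacle is exactly this last bit of bookkeeping: making the factor $\frac12$ from Thm.~\ref{thm:totnum}, the $(2\pi)^{-2}$ in the definition of $\mathcal{F}$, the three-dimensional Plancherel constant, and the measure on $S\cap I^+$ conspire so that no residual numerical factor survives. One must also be careful \emph{not} to invoke the on-shell relation $\mathcal{F}(\pi_{|\Sigma})(\mathbf{k}) = i(m^2+\mathbf{k}\cdot\mathbf{k})^{1/2}\mathcal{F}(\varphi_{|\Sigma})(\mathbf{k})$ at this point --- it holds only for positive-frequency solutions, not for a general $\varphi\in C_c^\infty(M)$ --- which is why both terms of $\overset{\mathbf{k}}{n}_\xi$ are retained and why the match with Thm.~\ref{thm:totnum} is genuinely term by term. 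Everything else is a routine application of Plancherel's theorem and the functional calculus for $\mu_M$.
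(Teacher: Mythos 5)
Your route is the paper's route: rewrite $\overline{N}_0^M$ via the functional calculus for $\mu_M$ (which acts as multiplication by $(m^2+\mathbf{k}\cdot\mathbf{k})^{1/2}$ under $\mathcal{F}$), apply Plancherel, and recognize the resulting momentum-space integrand as $\overset{\mathbf{k}}{n}_\xi$. The only substantive difference is cosmetic: the paper starts from the density in the form $\frac12\int_\Sigma\varphi(\mu_M\varphi)+\pi(\mu_M^{-1}\pi)$ (Eq.~(\ref{eq:altlocdens})) rather than the squared form of Thm.~\ref{thm:totnum}; by self-adjointness of $\mu_M$ these integrate to the same thing, so either entry point works.

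However, the step you defer as ``bookkeeping'' is where the one genuine idea of the proof lives, and you have not supplied it. The prefactor $\frac12$ is \emph{not} cancelled by the Plancherel normalization constant or by any convention hidden in $d\mathbf{k}$; it is absorbed by the passage from the full mass shell to the forward mass shell. Concretely, after Plancherel the right-hand side is $\frac12\int_S\overset{\mathbf{k}}{n}_\xi\,dk$, whereas $n_\xi$ is by definition $\int_{S\cap I^+}\overset{\mathbf{k}}{n}_\xi\,d\mathbf{k}$. These agree because $\varphi$ and $\pi$ are real-valued, so $\mathcal{F}(\varphi)(-\mathbf{k})=\overline{\mathcal{F}(\varphi)(\mathbf{k})}$ and likewise for $\pi$; hence $\overset{\mathbf{k}}{n}_\xi$ is even under $\mathbf{k}\mapsto-\mathbf{k}$ (equivalently under $k\mapsto-k$, which exchanges $S\cap I^+$ with $S\cap I^-$), giving $\frac12\int_S=\int_{S\cap I^+}$. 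Without this symmetry argument your two sides differ by a factor of $2$. Your cautionary remark about not invoking the positive-frequency relation $\mathcal{F}(\pi)=i\overset{k}{\omega}\mathcal{F}(\varphi)$ is well taken and consistent with the paper, which keeps both terms; but note that the reality of the initial data is still used, just in the weaker form above.
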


\begin{proof}
Recall that \citep[See, e.g.,][p. 50]{ReSi75}
\begin{align*}
(\mathcal{F}(\mu_M\varphi))(\mathbf{k}) = (m^2 + (\mathbf{k}\cdot\mathbf{k}))^{1/2}\cdot (\mathcal{F}\varphi)(\mathbf{k}).
\end{align*}
The Plancherel theorem \citep[][p. 10, Thm. IX.6]{ReSi75} then immediately implies our result:
\begin{align*}
 \overline{N}_0(\pi_{|\Sigma},\varphi_{|\Sigma}) &= \frac{1}{2}\int_{\Sigma}\varphi(\mathbf{x})(\mu_M\varphi)(\mathbf{x}) + \pi(\mathbf{x})(\mu_M^{-1}\pi)(\mathbf{x})\ d\mathbf{x}\\
  &=\frac{1}{2}\int_{S}\overline{\tilde{\varphi}(k)}(m^2 + (\mathbf{k}\cdot\mathbf{k}))^{1/2}\tilde{\varphi}(k) + \overline{\tilde{\pi}(k)}(m^2 + (\mathbf{k}\cdot\mathbf{k}))^{-1/2}\tilde{\pi}(k)\ dk\\
 &= \frac{1}{2}\int_{S}\overset{\mathbf{k}}{n}_\xi\ dk = \int_{S\cap I^+} \overset{\mathbf{k}}{n}_\xi\ dk,
\end{align*}
which is what we set out to show.
\end{proof}

Now we have shown that the classical total number operator $\overline{N}_0^M$ can be thought of as a uniform particle density, according to what we call the Uniform Density interpretation, which assigns to the region $\Delta$ the local particle number $N_0^{UD}(\Delta)$.  On this interpretation, every Klein-Gordon field is understood as a linear combination of perfect fluids, each associated with a constant velocity and energy density, implying a constant particle number and current density.  The classical total number operator $\overline{N}_0^M$ is what we would naturally think of as the total particle number density of the entire fluid as measured by an observer, which we obtain by integrating the particle number density, as seen by that observer, for each of the component fluids (modes).  Since the particle number densities for each of the component fluids are local in the standard sense, we understand the total particle number density $\overline{N}_0^M$ for the entire fluid to be local in the same sense.  Moreover, since $\overline{N}_0^M$ is an approximation to the total number operator in the quantum field theory, this provides a sense in which the quantum field theory allows for approximately localizable particles.

\section{Conclusion}
\label{sec:con}

In this paper, we have argued that, while results of \citet{Ma96} and \citet{HaCl02} provide obstacles to interpreting relativistic quantum field theories in terms of localizable particles, one can gain some understanding of particles in quantum field theories through the approximation of the classical limit.  We reviewed recent results of \citet{BrFeGaLiSo20} establishing the form of the classical limit of number operators in the quantum theory of the free Klein-Gordon field.  Our central contribution was to show that the classical number operators so obtained can be understood as localizable in at least two distinct senses.  First, on our Local Density interpretation, we showed from the fact that the classical total number operator can be written as a spatial integral of a density function that it yields a natural assignment of particle contents to local regions, which satisfies analogs of the conditions that Halvorson and Clifton propose for localization schemes, and agrees with the Newton-Wigner localization scheme when extended to the classical theory.  We noted, however, that there is still a sense in which on this interpretation the value of the number density in any given region depends on the field values outside of the region.  Second, on our Uniform Density interpretation, we showed by decomposing a Klein-Gordon field into its Fourier modes and understanding each mode as a perfect fluid with constant velocity and particle density, that the classical total number operator can also be understood as itself a uniform density obtained by summing the local particle densities associated with all modes.  These two interpretations provide distinct routes to understanding the particle content of the quantum Klein-Gordon theory as approximately local.

We take no stance on which of the two interpretations---Local Density or Uniform Density---is preferable.  We note here though that the density $n$ associated with the Local Density interpretation is not, in general, locally conserved under timelike translations while the density $\overline{N}_0^M$ associated with the Uniform Density interpretation is.  This perhaps speaks in favor of the Uniform Density interpretation for describing particles that persist over time.  On the other hand, this may speak against the Uniform Density interpretation because it disagrees with the description from the full quantum field theory of particles that can be created or annihilated.
We encourage further consideration of these particle interpretations of classical field theory to aid in the understanding of quantum field theory.

We mentioned in \S\ref{sec:loc} that on the Local Density interpretation, the value of the classical number operator in a region in some sense depend on the values of the fields outside that region.  As an aside, we now note that one can use the technical tools previously outlined to display this non-local dependence explicitly---calculating with the Fourier transform yields for the total number operator:
\begin{align}
    \overline{N}_0^M(\pi,\varphi) &= \frac{1}{2}\int_\Sigma \frac{\abs{\mathcal{F}(\pi)}(\mathbf{k})^2+(m^2 + (\mathbf{k}\cdot\mathbf{k}))\cdot \abs{\mathcal{F}(\varphi)(\mathbf{k})}^2}{(m^2 + (\mathbf{k}\cdot\mathbf{k}))^{1/2}} d\mathbf{k}\\
    &= \frac{1}{2(2\pi)^4}\int_\Sigma\int_\Sigma\int_\Sigma\Big(\pi(\mathbf{x})\pi(\mathbf{y}) + (\mu_M\varphi)(\mathbf{x})(\mu_M\varphi)(\mathbf{y})\Big) \frac{1}{(m^2 + (\mathbf{k}\cdot\mathbf{k}))^{1/2}} e^{-i\mathbf{k}\cdot(\mathbf{x}-\mathbf{y})} d\mathbf{k}\ d\mathbf{x}\ d\mathbf{y}\\
    &=\frac{1}{2\pi} \int_\Sigma \int_\Sigma \Big(\pi(\mathbf{x})\pi(\mathbf{y})\  + (\mu_M\varphi)(\mathbf{x})(\mu_M\varphi)(\mathbf{y})\Big)D(\mathbf{x}-\mathbf{y})\ d\mathbf{x}\ d\mathbf{y},
\end{align}
where we follow \citet{PeSc95} in using $D(\mathbf{x}-\mathbf{y})$ to denote the quantum field theoretic correlation function (i.e., the expectation value in the vacuum state of $\Phi(\mathbf{x})\Phi(\mathbf{y})$, where $\Phi$ is the operator valued distribution corresponding to the quantum field $\Phi$.  The first line follows from the Plancherel theorem, the second by subsituting the definition of the Fourier transform, and the third by substituting the definition of the correlation function.  Although this expression displays a dependence of the number operator on field values at spacelike separation, this dependence is somewhat attenuated.  A length scale for the dependence is set by the mass $m$ as follows: \citet[][p. 27]{PeSc95} show the asymptotic behavior of $D(\mathbf{x}-\mathbf{y})$ as $\abs{\mathbf{x}-\mathbf{y}}\to\infty$ is
\begin{align*}
    D(\mathbf{x}-\mathbf{y})\sim e^{-im\abs{\mathbf{x}-\mathbf{y}}},
\end{align*}
which entails that the total number operator is $1/m$-local in the sense of \citet{Wa01}.  This provides perhaps another sense in which the classical number operator is approximately local, but also perhaps a sense in which it is still not as local as one might have hoped.

In interpreting classical number operators, we focused solely on the classical Minkowski number operator and ignored the classical Rindler number operator, which the analysis of \citet{BrFeGaLiSo20} also covers.  The Local Density interpretation can be immediately extended to the Rindler number operators because Browning et al. establish that the classical total Rindler number operator is the spatial integral of a density function.  So the Local Density interpretation is general enough to apply to inequivalent number operators.

However, extending the Uniform Density interpretation to Rindler number operators requires further work.  It is well known that the Rindler number operators are associated with the Lorentz boost symmetries of the Rindler wedge, in the sense that the Rindler complex structure is the unique one commuting with the Lorentz boost symmetries, which are time translations in Rindler coordinates.  This implies in turn that the Fock space associated with the Rindler representation is built out of a different set of modes, again associated with these distinct timelike symmetries \citep{LePf81}.  In turn, these Rindler modes give rise to an alternative decomposition of a classical Klein-Gordon field---that is, an alternative to the decomposition into Fourier modes provided by the Fourier transform.  The Rindler modes can be constructed from Macdonald functions \citep[i.e., modified Bessel functions of the second kind; see][p. 911]{GrRy07}, and the decomposition into Rindler modes is provided by the associated Kontorovich-Lebedev transform \citep[][Ch. 2]{Ya96}.  If one uses the Kontorovich-Lebedev transform to decompose a classical Klein-Gordon field into Rindler modes, can one also recover a Uniform Density interpretation of classical Rindler particle content?  We save this question for future work, although we note that further investigations of these classical particle interpretations may help expose their relative merits, and may aid in the understanding of inequivalent particle concepts in quantum field theory.

Our results, and the further work suggested by them, demonstrate the usefulness of the classical limit as a tool for interpreting quantum field theory.  We hope this provides just a start to a better understanding of the particle content of quantum field theories.

\begin{acknowledgements}
We are especially indebted to Feiyang Liu, who worked closely with us on the mathematical results of this paper.  Thanks also to two anonymous reviewers, Adam Caulton, Kade Cicchella, Jeremy Steeger, James Weatherall, and the audience of the conference ``Foundations of Quantum Field Theory" (Rotman Institute of Philosophy, 2019) for helpful comments and discussion.
BHF acknowledges support during the completion of this work from the Royalty Research Fund at the University of Washington as well as the National Science Foundation under Grant No. 1846560.
\end{acknowledgements}

\bibliographystyle{apalike}
\setcitestyle{authoryear}
\bibliography{bibliography.bib}

\end{document}